\documentclass[11pt,a4paper]{article}

\usepackage{amsfonts}
\usepackage{amsmath}
\usepackage{amssymb}
\usepackage{amsthm}
\usepackage{fullpage}
\usepackage{mleftright}
\usepackage{enumitem}
\usepackage{color}
\usepackage[normalem]{ulem}

\DeclareFontFamily{U}{mathx}{\hyphenchar\font45}
\DeclareFontShape{U}{mathx}{m}{n}{
	<5> <6> <7> <8> <9> <10>
	<10.95> <12> <14.4> <17.28> <20.74> <24.88>
	mathx10
}{}
\DeclareSymbolFont{mathx}{U}{mathx}{m}{n}
\DeclareFontSubstitution{U}{mathx}{m}{n}
\DeclareMathAccent{\widecheck}{0}{mathx}{"71}

\DeclareMathOperator*{\E}{\mathbb{E}}
\DeclareMathOperator*{\Var}{\mathrm{Var}}

\newcommand*{\norm}[1]{\left\|#1\right\|}

\newcommand*{\dist}{\mbox{dist}}

\newcommand{\R}{\mathbb{R}}

\newcommand{\F}{\mathbb{F}}
\newcommand*\cupdot{\mathbin{\mathaccent\cdot\cup}}

\newcommand*{\sparse}{\mbox{{\small SPARSE}}}
\newcommand*{\subsparse}{\mbox{{\scriptsize SPARSE}}}
\newcommand*{\dense}{\mbox{{\small DENSE}}}

\numberwithin{equation}{section}
\newtheorem{theorem}{Theorem}[section]
\newtheorem*{theorem*}{Theorem}
\newtheorem{lemma}[theorem]{Lemma}

\newtheorem{proposition}[theorem]{Proposition}
\newtheorem{corollary}[theorem]{Corollary}

\theoremstyle{definition}
\newtheorem{definition}[theorem]{Definition}
\newtheorem*{remark*}{Remark}

\mleftright
\parskip=0.02in
\bibliographystyle{alpha}

\begin{document}

\title{Double Balanced Sets in High Dimensional Expanders\footnote{Originally appeared in RANDOM 2022~\cite{KM22}.}}
\author{
	Tali Kaufman\thanks{Bar-Ilan University, Israel. Email: \texttt{kaufmant@mit.edu}. Supported by ERC and BSF.} \and
	David Mass\thanks{Bar-Ilan University, Israel. Email: \texttt{dudimass@gmail.com}. Supported by the Adams Fellowship Program of the Israel Academy of Sciences and Humanities.}}
\maketitle

\begin{abstract}
Recent works have shown that expansion of pseudorandom sets is of great importance. %As an example, the discovery that pseudorandom small sets in the Grassmann graph expand near perfectly has led to the resolution of Khot's 2-to-2 Games Conjecture.
However, all current works on pseudorandom sets are limited only to product (or approximate product) spaces, where Fourier Analysis methods could be applied. In this work we ask the natural question whether pseudorandom sets are relevant in domains where Fourier Analysis methods cannot be applied, e.g., one-sided local spectral expanders.

We take the first step in the path of answering this question. We put forward a new definition for pseudorandom sets, which we call ``double balanced sets''.
We demonstrate the strength of our new definition by showing that small double balanced sets in one-sided local spectral expanders have very strong expansion properties, such as unique-neighbor-like expansion. We further show that cohomologies in cosystolic expanders are double balanced, and use the newly derived strong expansion properties of double balanced sets in order to obtain an exponential improvement over the current state of the art lower bound on their minimal distance.
\end{abstract}

\section{Introduction}
The study of pseudorandom (or ``global'') functions has led to many recent advancements. It has been shown that they possess an effective hypercontractive inequality in many domains such as the $p$-biased cube~\cite{KLLM19}, the slice~\cite{KMMS18}, the Grassmann graph~\cite{KMS18} and two-sided local spectral expanders~\cite{BHKL21}. The common observation in all of these works is that while hypercontractivity does not hold for any general function, it holds for a certain subclass of pseudorandom functions. This phenomenon has been the key to many breakthroughs, most famously the resolution of Khot's 2-to-2 Games Conjecture~\cite{KMS18}.

While this study of pseudorandom functions has been very fruitful in many domains, currently it is still limited only to domains where Fourier Analysis methods could be applied. These domains are product (or approximate product) spaces, so each function has an orthogonal (or an approximate orthogonal) decomposition. While these domains are enough for a lot of applications, there are many applications that require other domains. Some examples are the recent works on efficient sampling algorithms (e.g.,~\cite{ALOV19,ALO20,AL20,AJK+21} and more). The domains in these works are one-sided local spectral expanders, which inherently do not possess an orthogonal decomposition.

In this work we make the first step in the study of pseudorandom functions in other domains where Fourier Analysis methods cannot be applied. We put forward an alternative definition for pseudorandom functions, which we call ``double balanced sets''. We demonstrate the strength of our new definition by showing that small double balanced sets in one-sided local spectral expanders have very strong expansion properties. We further show that cohomologies in cosystolic expanders are double balanced, and then by the strong expansion properties of double balanced sets, we achieve an exponential improvement over the state of the art lower bound on their minimal distance.

\subsection{Double balanced sets}
In order to present our definition of double balanced sets, we need to set some notations first. A $d$-dimensional simplicial complex $X$ is a $(d+1)$-hypergraph which is closed under inclusions, i.e., if $\sigma \in X$ then every $\tau \subseteq \sigma$ is also in $X$. A $k$-face is a hyperedge of size $k+1$ and the set of $k$-faces in the complex is denoted by $X(k)$. For any face $\sigma \in X$, the link of $\sigma$, denoted by $X_\sigma$, is the subcomplex that is obtained by all the faces that contain $\sigma$ and then removing $\sigma$ from all of them.

Let $f \subseteq X(k)$ be a subset of $k$-faces in $X$. For any face $\sigma \in X(\ell)$, $\ell < k$, we denote by $f_\sigma \subseteq X_\sigma(k-\ell-1)$ the \emph{localization} of $f$ to the link of $\sigma$, where a face $\tau \in X_\sigma(k-\ell-1)$ is in $f_\sigma$ if and only if $\tau \cup \sigma \in f$. We also denote by $f^\sigma$ the \emph{restriction} of $f$ to the link of $\sigma$, where $f^\sigma = f \cap X_\sigma(k)$. Note that both $f_\sigma$ and $f^\sigma$ ``live'' in the link of $\sigma$, but $f_\sigma$ is a subset of $(k-\ell-1)$-faces whereas $f^\sigma$ is a subset of $k$-faces.

For simplicity, we assume in the introduction that the complex has a uniform probability distribution in every dimension. In the body of the paper we will take into account general probability distributions.

\begin{definition}[Double balanced sets]
We say that $f \subseteq X(k)$ is \emph{$\alpha$-double balanced in dimension $\ell$}, $\ell < k$, if for every $\ell$-face $\sigma \in X(\ell)$ it holds that
\begin{equation}
\frac{|f_\sigma|}{|X_\sigma(k-\ell-1)|} \le
\alpha\E_{v \in \sigma}\left[\frac{\left|(f_{\sigma\setminus v})^v\right|}{|X_\sigma(k-\ell)|}\right].
\end{equation}\label{intro-eq:double-balanced}
We say that $f$ is \emph{$\alpha$-double balanced} if it is $\alpha$-double balanced in dimension $\ell$ for every $\ell < k$.
\end{definition}

In order to get some intuition, let us focus on low dimensions first. Let $X$ be a $3$-dimensional complex and $f \subseteq X(2)$ (i.e., a set of triangles in a complex with pyramids).
\begin{itemize}
\item
For every vertex $v \in X(0)$, the left-hand side of~\eqref{intro-eq:double-balanced} translates to the fraction of triangles in $f$ that contain $v$ out of all the triangles that contain $v$, and the right-hand side of~\eqref{intro-eq:double-balanced} translates to $\alpha$ times the fraction of triangles in $f$ that together with $v$ form a pyramid out of all the pyramids that contain $v$.

\item For every edge $\{u,v\} \in X(1)$, the left-hand side of~\eqref{intro-eq:double-balanced} translates to the fraction of triangles in $f$ that contain $\{u,v\}$ out of all the triangles that contain $\{u,v\}$, and the right-hand side of~\eqref{intro-eq:double-balanced} translates to $\alpha$ times the average fraction of triangles in $f$ that contain $u$ or $v$ and together with $v$ or $u$, respectively, form a pyramid out of all the pyramids that contain $\{u,v\}$.
\end{itemize}

In general, the left-hand side of~\eqref{intro-eq:double-balanced} translates to the fraction of $k$-faces in $f$ that contain $\sigma$, and the right-hand side translates to the average fraction of $k$-faces in $f$ that contain $|\sigma|-1$ vertices from $\sigma$ and together with $\sigma$ forms a $(k+1)$-face.

Let us explain briefly the motivation behind this definition. From a spectral point of view, it is known that high dimensional random walks with intersections do not mix rapidly, whereas random walks without intersections (also known as swap walks~\cite{AJT19} or complement walks~\cite{DD19}) have an optimal mixing rate\footnote{By random walks with intersections we mean that we move from an $i$-face $\sigma$ to a $j$-face $\tau$ through a $k$-face that contain both $\sigma$ and $\tau$, where the intersection $\sigma \cap \tau$ may be non-empty, whereas random walks without intersections require that $\sigma \cap \tau$ would be empty.}. Previous works on pseudorandom sets (e.g., \cite{BHKL21}) benefit from the optimal mixing rate of non-intersecting random walks, but for that the complex has to be of a very high dimension, i.e., in order to gain anything on a pseudorandom set of dimension $k$, the complex has to be of dimension at least $2k$ (so we can move between $k$-faces without intersections). Our definition of double balanced sets benefits from the optimal mixing rate of non-intersecting random walks \emph{even when $d = k+1$}. The reason is that the right-hand side of~\eqref{intro-eq:double-balanced}, when viewed in the link of $\sigma \setminus v$ for some vertex $v \in \sigma$, is concerned with faces that do not contain $v$, i.e., it is related to a non-intersecting random walk inside the link of $\sigma \setminus v$.

From a topological point of view, our definition of double balanced sets relates faces of two consecutive dimensions (i.e., $(k-\ell-1)$-faces in the left-hand side of~\eqref{intro-eq:double-balanced} and $(k-\ell)$-faces in the right-hand side), similar to usual topological operators (e.g., the boundary and coboundary operators). In this sense, our definition has the potential to benefit also from the topological properties of the complex. Indeed, we show that cohomologies in high dimensional expanders are double balanced by utilizing the topological expansion of the complex.

To summarize the above discussion, our definition of double balanced sets has the potential to imitate a situation where the complex has many dimensions above (like in previous works) while having only one dimension above. It benefits both from spectral and topological properties of the complex, whereas previous works could only use spectral properties. We believe that utilizing the topological properties of the complex, as well as spectral properties, would lead to many breakthroughs in the future.

\subsection{Relation to the common definition}
We would like to formalize the intuitive similarity of our new definition (of double balanced sets) to the common definition (of pseudorandom sets).

The common definition of pseudorandom sets, as given in~\cite{BHKL21}\footnote{The actual definition is considered with general functions from $X(k)$ to $\mathbb{R}$. For simplicity we consider only functions from $X(k)$ to $\{0,1\}$, i.e., functions that correspond to subsets of $k$-faces.}, says that a set of $k$-faces $f$ is $\varepsilon$-pseudorandom in dimension $\ell$, $\ell < k$, if for every $\ell$-face $\sigma \in X(\ell)$ it holds that
\begin{equation}
\frac{|f_\sigma|}{|X_\sigma(k-\ell-1)|} \le \varepsilon.
\end{equation}

As demonstrated in the following lemma, our definition of double balanced sets implies almost pseudorandomness.
\begin{lemma}\label{intro-lem:relation-to-common-def}
Let $X$ be a good enough one-sided local spectral expander\footnote{The definition of one-sided local spectral expansion will be introduced later in the paper.}. For any $\alpha$-double balanced set of $k$-faces $f \in X(k)$ and any dimension $\ell < k$, if
$$\frac{|f|}{|X(k)|} \le \frac{\varepsilon}{(\ell+1)\alpha^\ell}$$
then
$$\Pr_{\sigma \in X(\ell)}\left[\frac{|f_\sigma|}{|X_\sigma(k-\ell-1)|} \le \varepsilon\right] \ge 1-\varepsilon\frac{|f|}{|X(k)|}.$$
\end{lemma}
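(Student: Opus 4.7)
The plan is induction on $\ell$, combining the double balanced inequality with spectral concentration in the links of $X$. Write $\mu_\ell(\sigma) := |f_\sigma|/|X_\sigma(k-\ell-1)|$ and $q(\sigma, v) := |(f_{\sigma \setminus v})^v|/|X_\sigma(k-\ell)|$, so the hypothesis reads $\mu_\ell(\sigma) \le \alpha\, \E_{v \in \sigma}[q(\sigma, v)]$. I would first convert this pointwise bound into a probabilistic one: if $\mu_\ell(\sigma) > \varepsilon$ then $\E_{v \in \sigma}[q(\sigma, v)] > \varepsilon/\alpha$, so because $q \in [0,1]$ a reverse-Markov argument forces at least a fraction $\varepsilon/(2\alpha)$ of $v \in \sigma$ to satisfy $q(\sigma, v) > \varepsilon/(2\alpha)$. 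Averaging over $\sigma$ and re-parameterizing $(\sigma, v) \leftrightarrow (\sigma' = \sigma \setminus v, v)$ (with $\sigma' \in X(\ell-1)$ uniform and $v \in X_{\sigma'}(0)$ uniform in the uniform-distribution setting of the introduction) yields
\[
\Pr_\sigma[\mu_\ell(\sigma) > \varepsilon] \ \le\ \frac{2\alpha}{\varepsilon}\, \Pr_{(\sigma',\, v)}\!\left[r(\sigma', v) > \frac{\varepsilon}{2\alpha}\right],
\]
where $r(\sigma', v) = q(\sigma, v)$ is the density of $f_{\sigma'}$ restricted to the link of $v$ inside $X_{\sigma'}$.

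For the right-hand side, I would split on the magnitude of $\mu_{\ell-1}(\sigma')$. Inside the link $X_{\sigma'}$, which inherits the one-sided local spectral expansion of $X$, the function $v \mapsto r(\sigma', v)$ has mean $\mu_{\ell-1}(\sigma')$ by double counting, and its variance is bounded by $\lambda\, \mu_{\ell-1}(\sigma')$ via expander mixing on the swap walk between $X_{\sigma'}(0)$ and $X_{\sigma'}(k-\ell)$. When $\mu_{\ell-1}(\sigma') > \varepsilon'$ for a carefully chosen threshold, I would invoke the inductive hypothesis at dimension $\ell-1$; when $\mu_{\ell-1}(\sigma') \le \varepsilon'$, Chebyshev's inequality produces a tail bound whose average over $\sigma'$ is controlled by $\lambda \alpha^2 \eta/\varepsilon^2$. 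The threshold $\varepsilon'$ must be chosen so that both the inductive hypothesis is applicable (the requirement $\eta \le \varepsilon'/(\ell \alpha^{\ell-1})$ must follow from the given $\eta \le \varepsilon/((\ell+1)\alpha^\ell)$, forcing $\varepsilon' \approx \varepsilon \ell/((\ell+1)\alpha)$) and the Chebyshev denominator $\varepsilon/(2\alpha) - \varepsilon'$ stays comfortably positive. The base case $\ell = 0$ follows the same template directly: the double balanced inequality gives $\mu_0(v) \le \alpha\, |f^v|/|X_v(k)|$, and expander mixing on the swap walk between $X(0)$ and $X(k)$ combined with Chebyshev's inequality yields the desired bound.

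The main obstacle is the careful bookkeeping of constants. The $(\ell+1)\alpha^\ell$ factor in the hypothesis must absorb the reverse-Markov prefactor $2\alpha/\varepsilon$ appearing at every inductive step together with the $\alpha$ per dimensional reduction coming from the double balanced inequality, and the expansion parameter $\lambda$ must be taken polynomially small in $\varepsilon$, $\alpha^{-1}$, and $(\ell+1)^{-1}$ so that the Chebyshev contributions at every level remain subdominant. The phrase \emph{good enough one-sided local spectral expander} in the statement precisely captures this polynomial requirement on $\lambda$, and the exact form $\eta \le \varepsilon/((\ell+1)\alpha^\ell)$ is tuned to make the inductive thresholds and spectral slacks fit together to give the final $\varepsilon\eta$ bound.
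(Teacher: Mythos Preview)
Your overall architecture---go down one dimension via the double-balanced inequality, split into ``already dense below'' versus ``sparse below but large restriction'', and control the latter by Chebyshev on the swap walk---is exactly the paper's architecture. Two choices, however, prevent your argument from closing.

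First, the reverse-Markov step is lossy in a way that breaks the numerics. With constant $1/2$ you detect $q(\sigma,v)>\varepsilon/(2\alpha)$, but you also need $\varepsilon'\ge \varepsilon\ell/((\ell+1)\alpha)$ for the inductive hypothesis to apply, so the Chebyshev gap $\varepsilon/(2\alpha)-\varepsilon'$ is nonpositive for every $\ell\ge 1$. The paper avoids this entirely: from $\E_{v\in\sigma}[q(\sigma,v)]>\varepsilon/\alpha$ it just uses $\max\ge$ average to find a single $v\in\sigma$ with $q(\sigma,v)>\varepsilon/\alpha$, and then pays a prefactor of $\ell+1$ (the number of choices of $v$) rather than $2\alpha/\varepsilon$. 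This keeps the threshold at $\varepsilon/\alpha$ and leaves room for a positive gap.

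Second, and more fundamentally, using the lemma statement itself as the inductive hypothesis introduces a term that is \emph{not} controlled by $\lambda$. The hypothesis at level $\ell-1$ gives only $\Pr[\mu_{\ell-1}>\varepsilon']\le \varepsilon'\norm{f}$; after multiplying by your prefactor this contributes at least $\ell\norm{f}$ to the bound on $\Pr[\mu_\ell>\varepsilon]$, which already exceeds the target $\varepsilon\norm{f}$ whenever $\varepsilon<1$. No choice of $\lambda$ fixes this. The paper's proof (it is literally the proof of Proposition~\ref{prop:bound-dense-faces} with $\eta_\ell=\varepsilon$) instead defines thresholds $\eta_\ell,\eta_{\ell-1},\dots,\eta_{-1}$ with $\eta_{i-1}=\eta_i/\alpha-(\text{small slack})$ and unrolls the recursion on $\norm{\textsc{dense}_i}$ all the way to $i=-1$. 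The hypothesis $\norm{f}\le \varepsilon/((\ell+1)\alpha^\ell)$ is precisely $\norm{f}\le\eta_{-1}$, so $\textsc{dense}_{-1}=\emptyset$ exactly. Then every surviving term in the bound on $\norm{\textsc{dense}_\ell}$ is a Chebyshev contribution of order $\lambda^2\norm{f}$, and taking $\lambda$ small enough (this is what ``good enough'' means) makes the total at most $\varepsilon\norm{f}$.

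The fix is therefore: replace reverse-Markov by pigeonhole, and replace the induction on the lemma by a direct descent to dimension $-1$ with a zero base case.
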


In words, for a sufficiently small set, if the set is $\alpha$-double balanced then it is also almost pseudorandom, i.e., all $\ell$-faces besides of a negligible fraction of them satisfy the pseudorandomness property.

\subsection{Inheritance property}
An interesting property that applies to double balanced sets is that it is inherited by lower dimensions. We show that a set of $k$-faces which is double balanced in dimension $\ell$ is also double balanced in all dimensions below $\ell$. This result is obtained by applying the following lemma step by step.

\begin{lemma}[Double balance inheritance]\label{intro-lem:double-balanced-inheritance}
If $f \subseteq X(k)$ is $\alpha$-double balanced in dimension $\ell$, then $f$ is $\alpha'$-double balanced in dimension $\ell-1$, where
$$\alpha' = \frac{\alpha\ell}{\ell+1-\alpha}.$$
\end{lemma}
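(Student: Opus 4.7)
The plan is to fix $\tau \in X(\ell-1)$ and derive the dimension-$(\ell-1)$ condition for $\tau$ by averaging the dimension-$\ell$ condition applied to all $\ell$-faces $\sigma = \tau \cup \{v\}$ extending $\tau$ by one vertex $v$ in the link of $\tau$. The weighting of the average must be chosen so that the averaged left-hand side recovers the desired left-hand side. Explicitly, I would draw $v$ with probability $P(v) \propto |X_{\tau \cup \{v\}}(k-\ell-1)|$. Under this weighting, the double-counting identity $\sum_v |f_{\tau \cup \{v\}}| = (k-\ell+1)|f_\tau|$ gives
\[
\E_v \frac{|f_\sigma|}{|X_\sigma(k-\ell-1)|} = \frac{|f_\tau|}{|X_\tau(k-\ell)|}.
\]

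On the right-hand side of the $\ell$-condition for $\sigma$, I would split the inner average over $w \in \sigma$ into the contribution from $w = v$ (the new vertex) and from $w = u \in \tau$ (an original vertex). The $w = u$ contributions are the ``clean'' ones: double-counting $(k+1)$-faces $\Gamma \supset \tau$ with $\Gamma \setminus u \in f$ gives $\sum_v |(f_{\sigma \setminus u})^u| = (k-\ell+2)|(f_{\tau \setminus u})^u|$, and combined with the analogous identity for the ambient counts this yields
\[
\E_v \frac{|(f_{\sigma \setminus u})^u|}{|X_\sigma(k-\ell)|} = \frac{|(f_{\tau \setminus u})^u|}{|X_\tau(k-\ell+1)|},
\]
which is precisely the quantity appearing on the right-hand side of the target $(\ell-1)$-condition.

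The delicate step, and the main expected obstacle, is the $w = v$ contribution. The governing identity is $\sum_v |(f_\tau)^v| = \sum_{\kappa \in f_\tau} \deg^+(\kappa)$, since each $k$-face $\kappa \in f_\tau$ is counted once per $(k+1)$-face extension. Under the natural probability distribution on the complex (induced from the top dimension, so that up-degree factors cancel uniformly), this identity turns into
\[
\E_v \frac{|(f_\tau)^v|}{|X_\sigma(k-\ell)|} = \frac{|f_\tau|}{|X_\tau(k-\ell)|},
\]
i.e., exactly the same quantity as the averaged left-hand side. This self-referential feature is precisely what forces the constant to worsen.

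Assembling the pieces, the averaged $\ell$-condition reads
\[
\frac{|f_\tau|}{|X_\tau(k-\ell)|} \le \frac{\alpha}{\ell+1}\left[\frac{|f_\tau|}{|X_\tau(k-\ell)|} + \sum_{u \in \tau}\frac{|(f_{\tau \setminus u})^u|}{|X_\tau(k-\ell+1)|}\right].
\]
Moving the first right-hand side term to the left produces a factor of $(\ell+1-\alpha)/(\ell+1)$, and inverting yields the $(\ell-1)$-condition with $\alpha' = \alpha\ell/(\ell+1-\alpha)$, as claimed. The technical care needed throughout is to verify that all the invoked double-counting identities — in particular the self-referential one for the $w=v$ term — hold under the probability distribution actually used in the body of the paper; choosing the distribution induced from the top dimension is exactly what makes these identities exact by construction.
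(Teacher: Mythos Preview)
Your proposal is correct and is essentially the same argument as the paper's: fix $\tau\in X(\ell-1)$, average the dimension-$\ell$ inequality over all $\sigma=\tau\cup\{v\}$ with the link distribution, split the inner expectation over $w\in\sigma$ into the ``new vertex'' term $w=v$ (which reproduces $\norm{f_\tau}$) and the $\ell$ ``old vertex'' terms $w=u\in\tau$ (which give $\E_{u\in\tau}\norm{(f_{\tau\setminus u})^u}$), and then solve for $\norm{f_\tau}$. The paper compresses your double-counting identities into the phrase ``laws of probability,'' but the computation is line-for-line the same.
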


It is worth to note that when $f$ is perfectly double balanced, i.e., when $\alpha=1$, then lemma~\ref{intro-lem:double-balanced-inheritance} implies that $f$ is also perfectly double balanced in all dimensions below $\ell$. In other words, perfect double balance is inherited by lower dimensions \emph{without any loss}.

\subsection{$\delta_1$-expansion of small double balanced sets}
In recent years, a few different notions of high dimensional expansion have been studied. One such notion is $\delta_1$-expansion, which can be viewed as a generalization of unique-neighbor expansion in graphs. It is a strong expansion notion that is usually very hard to get. For a set of $k$-faces $f \subseteq X(k)$, $\delta_1(f)$ is defined as the set of $(k+1)$-faces that contain exactly one $k$-face from $f$. We say that $f$ is $\delta_1$-expanding if
\begin{equation}\label{intro-eq:delta1-expansion}
\frac{|\delta_1(f)|}{|X(k+1)|} \ge \varepsilon\frac{|f|}{|X(k)|}.
\end{equation}
In~\cite{KM21} it has been shown that $\delta_1$-expansion for small sets implies group-independent cosystolic expansion, i.e., cosystolic expansion over any group.

In order to demonstrate the strength of our definition of double balanced sets, we show that small double balanced sets are $\delta_1$-expanding. On one hand, we show that when a double balanced set $f$ is sufficiently small, it has a \emph{nearly perfect} $\delta_1$-expansion, i.e., $\varepsilon$ in equation~\eqref{intro-eq:delta1-expansion} is very close to $k+2$. On the other hand, for larger double balanced sets (which are still small, but not that small), we show that they have some $\delta_1$-expansion, i.e., $\varepsilon > 0$ in equation~\eqref{intro-eq:delta1-expansion}. We prove the following two theorems. 

\begin{theorem}[Nearly optimal $\delta_1$-expansion for sufficiently small double balanced sets]\label{intro-thm:optimal-delta1-expansion}
Let $X$ be a good enough one-sided local spectral expander.
For any $\alpha$-double balanced set of $k$-faces $f \subseteq X(k)$ and $\varepsilon > 0$, if
$$\frac{|f|}{|X(k)|} \le \frac{\varepsilon}{(k+1)^2\alpha^k}$$
then
$$\frac{|\delta_1(f)|}{|X(k+1)|} \ge (1-3\varepsilon)(k+2)\frac{|f|}{|X(k)|}.$$
\end{theorem}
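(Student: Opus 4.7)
The plan is to reduce the theorem to an ``overcount'' estimate and then to control that overcount using the double balance hypothesis together with a one-sided expander mixing argument in the links. For each $\pi \in X(k+1)$, set $c(\pi) = |\{\sigma \in f : \sigma \subset \pi\}|$, so that $|\delta_1(f)| = |\{\pi : c(\pi) = 1\}|$ and, by double counting, $\sum_\pi c(\pi) = (k+2)|X(k+1)||f|/|X(k)|$ in the uniform case. The elementary inequality $j \leq j(j-1)$ for every $j \geq 2$ then gives
\[
|\delta_1(f)| \;\geq\; \sum_\pi c(\pi) \;-\; 2P, \qquad P := \sum_{\pi \in X(k+1)} \binom{c(\pi)}{2},
\]
so the theorem reduces to showing $2P \leq 3\varepsilon(k+2)|X(k+1)||f|/|X(k)|$.

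Two $k$-faces lie in a common $(k+1)$-face exactly when they share a $(k-1)$-face $\eta$ and their two complementary vertices form an edge in the link $X_\eta$, so $P = \sum_{\eta \in X(k-1)} e_\eta(f)$, where $e_\eta(f)$ is the number of edges $\{v_1,v_2\} \in X_\eta(1)$ with both endpoints in $f_\eta$. To bound $e_\eta(f)$ I would use that, because $X$ is a sufficiently good one-sided local spectral expander, the graph $X_\eta$ has all its non-trivial normalized eigenvalues at most a tiny $\lambda$. The usual quadratic form argument --- whose upper direction requires only bounds on the positive side of the spectrum --- then gives the one-sided analogue of the expander mixing lemma,
\[
\frac{e_\eta(f)}{|X_\eta(1)|} \;\leq\; \left(\frac{|f_\eta|}{|X_\eta(0)|}\right)^{\!2} + \lambda \cdot \frac{|f_\eta|}{|X_\eta(0)|}.
\]
For this to be useful I need the link density $|f_\eta|/|X_\eta(0)|$ to be small for typical $\eta$, which is exactly the content of Lemma~\ref{intro-lem:relation-to-common-def}: under the hypothesis $|f|/|X(k)| \leq \varepsilon/((k+1)^2 \alpha^k)$ (where the $\alpha^k$ factor absorbs the inheritance iteration of Lemma~\ref{intro-lem:double-balanced-inheritance}), all but a negligible fraction of $(k-1)$-faces $\eta$ satisfy $|f_\eta|/|X_\eta(0)| \leq \varepsilon_1$ for a suitable small $\varepsilon_1$.

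To finish, I would split $P = \sum_\eta e_\eta(f)$ according to whether $\eta$ is ``good'' (link density at most $\varepsilon_1$) or ``bad''. For good $\eta$, the quadratic bound together with the identity $\sum_\eta |X_\eta(1)| = \binom{k+2}{2}|X(k+1)|$ yields a contribution at most $\varepsilon(k+2)|X(k+1)||f|/|X(k)|$ (using the ``good enough'' assumption to make $\lambda$ small). For the bad $\eta$, whose total measure is at most $\varepsilon_1|f|/|X(k)|$ by Lemma~\ref{intro-lem:relation-to-common-def}, the trivial bound $e_\eta(f) \leq |X_\eta(1)|$ yields a contribution of the same order. Adding the two gives $2P \leq 3\varepsilon(k+2)|X(k+1)||f|/|X(k)|$, as desired.

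The main obstacle is obtaining the quadratic (rather than merely linear) upper bound on $e_\eta(f)$ in the absence of two-sided spectral expansion on the links. The saving observation is that for \emph{upper} bounds in the quadratic form calculation only the positive side of the spectrum matters, so one-sided local spectral expansion suffices; the role of the double balance hypothesis --- channelled through Lemma~\ref{intro-lem:relation-to-common-def} --- is to keep the link density $|f_\eta|/|X_\eta(0)|$ small for typical $\eta$, which is what makes the $\varepsilon_1^2$ term dominate both the $\lambda\varepsilon_1$ spectral correction and the bad-$\eta$ slack.
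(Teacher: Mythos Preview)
Your proposal is correct and takes essentially the same approach as the paper: the overcount inequality $|\delta_1(f)| \ge \sum_\pi c(\pi) - 2P$ together with the rewriting $P = \sum_\eta e_\eta(f)$ is a streamlined packaging of the paper's Lemmas~\ref{lem:delta1-composition-to-links}--\ref{lem:bound-delta1-delta2-in-links}, your good/bad split on $(k-1)$-faces is the paper's $\sparse_{k-1}/\dense_{k-1}$ dichotomy, the one-sided expander mixing bound on $e_\eta(f)$ is Lemma~\ref{lem:bound-delta1-delta2-in-links}(2), and invoking Lemma~\ref{intro-lem:relation-to-common-def} to bound the bad set is exactly Proposition~\ref{prop:bound-dense-faces}. (One small aside: the $\alpha^k$ in the size hypothesis arises from the dimension-by-dimension descent inside the proof of Proposition~\ref{prop:bound-dense-faces}, not from the inheritance Lemma~\ref{intro-lem:double-balanced-inheritance}; since $f$ is already assumed $\alpha$-double balanced in every dimension, inheritance plays no role here.)
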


\begin{theorem}[Some $\delta_1$-expansion for small double balanced sets]\label{intro-thm:some-delta1-expansion}
Let $X$ be a good enough one-sided local spectral expander.
For any $\alpha$-double balanced set of $k$-faces $f \subseteq X(k)$ and $\varepsilon > 0$, if
$$\frac{|f|}{|X(k)|} \le \frac{1-\varepsilon}{(k+1)\alpha^k}$$
then
$$\frac{|\delta_1(f)|}{|X(k+1)|} > 0.$$
\end{theorem}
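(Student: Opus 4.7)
I would prove this by contradiction: assume $|\delta_1(f)|=0$ and deduce $|f|/|X(k)|\ge 1/((k+1)\alpha^{k})$, contradicting the size hypothesis.

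\textbf{Step 1 (Covering from the absence of unique neighbors).} If $|\delta_1(f)|=0$ then every $(k+1)$-face meeting $f$ contains at least two $k$-faces of $f$. So for every $\sigma\in f$ and every $w\in X_\sigma(0)$, the $(k+1)$-face $\sigma\cup\{w\}$ must contain some $\sigma'=(\sigma\setminus\{v\})\cup\{w\}\in f$ with $v\in\sigma$. Summing the resulting union bound over $v\in\sigma$ gives
$$\sum_{v\in\sigma}|(f_{\sigma\setminus v})^{v}|\;\ge\;|X_\sigma(0)|,$$
and pigeonhole over the $k+1$ vertices of $\sigma$ selects a $v_\sigma\in\sigma$ with $|(f_{\sigma\setminus v_\sigma})^{v_\sigma}|\ge |X_\sigma(0)|/(k+1)$. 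Since $(f_\mu)^{v}\subseteq f_\mu$, the $(k-1)$-face $\mu_\sigma:=\sigma\setminus v_\sigma$ satisfies $|f_{\mu_\sigma}|\ge |X_\sigma(0)|/(k+1)$.

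\textbf{Step 2 (Iterated descent via double balance).} Starting from the locally dense $\mu_\sigma\in X(k-1)$, I apply double balance at $\ell=k-1$ to obtain
$$|f_{\mu_\sigma}|\;\le\;\frac{\alpha\,|X_{\mu_\sigma}(0)|}{|X_{\mu_\sigma}(1)|}\,\E_{v\in\mu_\sigma}\bigl[|(f_{\mu_\sigma\setminus v})^{v}|\bigr].$$
Choosing the $v$ achieving at least the average and again using the inclusion $(f_{\mu'})^{v}\subseteq f_{\mu'}$ extracts a $(k-2)$-face $\mu_\sigma^{(1)}\subset\mu_\sigma$ with $|f_{\mu_\sigma^{(1)}}|\ge(|X_{\mu_\sigma}(1)|/(\alpha|X_{\mu_\sigma}(0)|))\cdot|f_{\mu_\sigma}|$. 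Repeating this descent through $\ell=k-1,k-2,\ldots,0$ (a total of $k$ applications, finishing with double balance at $\ell=0$ which directly produces the quantity $|f|$ on the right) accumulates a factor $\alpha^{-k}$ and a product of link-size ratios that telescopes via the standard regularity identities of $X$.

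\textbf{Step 3 (Contradiction and main obstacle).} Combining Steps 1 and 2 yields $|f|/|X(k)|\ge 1/((k+1)\alpha^{k})$, contradicting the hypothesis $|f|/|X(k)|\le(1-\varepsilon)/((k+1)\alpha^{k})$; hence $|\delta_1(f)|>0$. The subtle point is Step 2: the right-hand side of double balance features a \emph{restriction} $(f_{\mu\setminus v})^{v}$, not a pure localization $|f_{\mu\setminus v}|$, so the output of one application cannot be fed directly into the next. The inclusion $(f_{\mu'})^{v}\subseteq f_{\mu'}$ bridges the gap, but then one must check carefully that the link-size ratios accumulated across $k$ descents telescope cleanly to the $|X(k)|$ denominator without leaving spurious combinatorial factors; this is where most of the technical bookkeeping (and the use of the general probability distributions on $X$) resides.
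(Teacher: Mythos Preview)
Your Step~1 is correct and is a pleasant observation: if $\delta_1(f)=\emptyset$ then for every $\sigma\in f$ one has $\sum_{v\in\sigma}\|(f_{\sigma\setminus v})^v\|\ge 1$ in the link of $\sigma$, so some $(k-1)$-subface $\mu_\sigma$ has $\|(f_{\mu_\sigma})^{v_\sigma}\|\ge 1/(k+1)$. This bypasses the paper's use of Cheeger in the links (Lemma~\ref{lem:bound-delta1-delta2-in-links}) for the special case $\delta_1(f)=0$.

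The genuine gap is in Steps~2--3, and it is exactly the point you flag as ``where most of the technical bookkeeping resides''. The passage from the \emph{restriction} norm $\|(f_{\mu'})^{v}\|$ (a density inside $X_{\mu'\cup v}$) to the \emph{localization} norm $\|f_{\mu'}\|$ (a density inside $X_{\mu'}$) is not controlled by the set inclusion $(f_{\mu'})^v\subseteq f_{\mu'}$: that inclusion gives a raw-count inequality, and when you convert back to densities you pick up the ratio $|X_{\mu'\cup v}(j{+}1)|/|X_{\mu'}(j{+}1)|$, which in a general complex can be arbitrarily small. There is no ``standard regularity identity'' that makes the accumulated product telescope to $|X(k)|$; your argument never invokes the one-sided local spectral expansion hypothesis, and without it the conclusion is simply false. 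A clean witness: take $X$ to be the disjoint union of a complete complex on $n_1$ vertices and one on $n_2\gg n_1$ vertices, and let $f$ be all $k$-faces of the small component. Then $f$ is $1$-double balanced, $\delta_1(f)=\emptyset$, yet $|f|/|X(k)|$ is as small as one likes. Your chain of faces stays inside the small component and the link-size product reflects $n_1$, not $n_1+n_2$.

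This is why the paper's proof is organized differently. Instead of tracking a single descending chain, it works in expectation: it defines dense faces at every level, and Lemma~\ref{lem:in-front-seen-with-right-fraction} uses the spectral gap of the non-intersecting (complement) walk to show that for all but a negligible set of vertices $u$ the restriction density $\|f^u\|$ is close to $\|f\|$. That lemma is precisely what substitutes for the ``telescoping'' you are hoping for; combined with double balance it drives the recursion of Lemma~\ref{lem:dense-faces-one-dimension-lower} and yields Proposition~\ref{prop:bound-dense-faces}. Any repair of your argument will have to introduce the spectral hypothesis at the restriction-to-localization step.
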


Both of theorems~\ref{intro-thm:optimal-delta1-expansion} and~\ref{intro-thm:some-delta1-expansion} demonstrate the strength of our definition of double balanced sets. The key idea that since $f$ is a small set, its double balance property implies that it has to be small in every link as well, which in turn implies $\delta_1$-expansion. The novelty over previous works (e.g., \cite{KKL14,EK16,KM21}) is to benefit from the optimal mixing rate of non-intersecting random walks. As explained in section 1.1, our definition of double balanced sets is related in a sense to non-intersecting random walks and hence benefits from an optimal mixing rate. This is in contrast to previous works, which essentially used only intersecting random walks, and hence could obtain worse bounds and only for much smaller sets.

\subsection{Application to minimal distance of cohomologies}
Cohomologies stand in the center of recent studies in Mathematics, and they have already found some applications in Theoretical Computer Science as well. Complexes with large cohomologies have played a key role in the construction of efficiently decodable quantum LDPC codes with a large distance~\cite{EKZ20}. It is known by now to construct quantum LDPC codes with a larger distance~\cite{PK21,LZ22}, however these are not known to be efficiently decodable. Complexes with large cohomologies were also the main block in the first construction of explicit 3XOR instances that are hard for the Sum-of-Squares Hierarchy~\cite{DFHT20}. Other constructions which are hard for more levels of the the Sum-of-Squares Hierarchy~\cite{HL22} are known by now. Nonetheless, the construction of~\cite{DFHT20} is still the best known construction from \emph{simplicial} complexes and it has been the first step in this line of works.

In order to define cohomologies, let us identify a set of $k$-faces in $X$ with an $\F_2$-valued function $f : X(k) \to \F_2$ and denote by $C^k(X)$ the space of all $\F_2$-valued functions on $X(k)$. The coboundary operator $\delta^k : C^k(X) \to C^{k+1}(X)$ is defined by
$$\delta^k f(\sigma) = \sum_{u \in \sigma}f(\sigma \setminus \{u\}) \mbox{ mod } 2.$$
The image of $\delta^{k-1}$ is called the $k$-coboundaries and is denoted by
$$B^k(X) = \{\delta^{k-1} f \;|\; f \in C^{k-1}(X)\}.$$
The kernel of $\delta^k$ is called the $k$-cocycles and is denoted by
$$Z^k(X) = \{f \in C^k(X) \;|\; \delta^k f = \mathbf{0}\}.$$
It is not hard to check that $B^k(X) \subseteq Z^k(X) \subseteq C^k(X)$.
The $k$-cohomology of $X$ is the quotient space $H^k(X) = Z^k(X)/B^k(X)$.

Previous works could only obtain complexes with some constant lower bound on the size of their cohomologies~\cite{KKL14,EK16,KM21}. We show that for high dimensional expanders (in a topological sense), all of their cohomology elements are double balanced. We then utilize the $\delta_1$-expansion of double balanced sets in order to obtain a lower bound on their size, achieving an \emph{exponential} improvement upon the current state of the art.

\begin{theorem}[Cohomologies are double balanced]\label{intro-thm:cohomologies-double-balanced}
For a complex whose links are topological expanders, every $k$-cohomology element is $((k+1)/\beta$)-double balanced, where $\beta$ is the expansion constant in the links of the complex.
\end{theorem}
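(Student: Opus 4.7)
The plan is to pick a minimum-weight cocycle representative $f$ of the cohomology class and, for every $\ell$-face $\sigma$, run a coboundary-expansion argument inside the link $X_\sigma$. The first ingredient is an algebraic identity. For $\tau \in X_\sigma(k-\ell)$, expanding $(\delta^{X_\sigma} f_\sigma)(\tau) = \sum_{u \in \tau} f((\tau \setminus u) \cup \sigma)$ and using $(\delta^X f)(\tau \cup \sigma) = 0$ to trade the ``$u \in \tau$'' sum for the ``$v \in \sigma$'' sum gives
\[
\delta^{X_\sigma} f_\sigma \;=\; \sum_{v \in \sigma}(f_{\sigma\setminus v})^v \pmod{2},
\]
so the coboundary of $f_\sigma$ in the link is precisely the mod-$2$ sum of the $\ell+1$ terms appearing on the right-hand side of the double-balance inequality.

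Granted the structural claim that $|f_\sigma| \le |f_\sigma + z|$ for every $z \in Z^{k-\ell-1}(X_\sigma)$, i.e.\ that $f_\sigma$ realizes $\dist(f_\sigma, Z^{k-\ell-1}(X_\sigma))$, the $\beta$-coboundary expansion of the link gives $|f_\sigma|/|X_\sigma(k-\ell-1)| \le (1/\beta)\cdot|\delta^{X_\sigma} f_\sigma|/|X_\sigma(k-\ell)|$. Substituting the identity and applying the triangle inequality $|\delta^{X_\sigma} f_\sigma| \le (\ell+1)\,\E_{v \in \sigma}|(f_{\sigma\setminus v})^v|$ yields
\[
\frac{|f_\sigma|}{|X_\sigma(k-\ell-1)|} \;\le\; \frac{\ell+1}{\beta}\,\E_{v \in \sigma}\frac{|(f_{\sigma\setminus v})^v|}{|X_\sigma(k-\ell)|} \;\le\; \frac{k+1}{\beta}\,\E_{v \in \sigma}\frac{|(f_{\sigma\setminus v})^v|}{|X_\sigma(k-\ell)|},
\]
which is exactly $((k+1)/\beta)$-double balance in dimension $\ell$. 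Ranging $\ell$ over $\{0, \ldots, k-1\}$ concludes the theorem.

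The main obstacle is the local-minimality claim, which I would prove by contradiction via a lifting argument. Given a local cocycle $z \in Z^{k-\ell-1}(X_\sigma)$ with $|f_\sigma + z| < |f_\sigma|$, one constructs $z^\ast \in C^{k-1}(X)$ satisfying $(f + \delta z^\ast)_\sigma = f_\sigma + z$ without creating extra weight on $k$-faces outside the link, contradicting the global minimality of $f$. For a vertex link ($\ell=0$) the lift is essentially canonical: setting $z^\ast(\tau) = z(\tau)$ for $\tau \in X_v(k-1)$ and $z^\ast(\tau) = 0$ elsewhere, the cocycle condition $\delta z = 0$ forces $\delta z^\ast$ to vanish on all of $X_v(k)$, so the ``outside'' weight is preserved. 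For higher-dimensional $\sigma$ a naive lift can leak onto $k$-faces not passing through $\sigma$; to handle this I would proceed by induction on $\ell$, decomposing a local modification in $X_\sigma$ into a sequence of single-vertex modifications along a flag $\emptyset \subset \{v_0\} \subset \{v_0,v_1\} \subset \cdots \subset \sigma$ and applying the vertex-link argument in each intermediate link, each of which is itself a topological expander by hypothesis.
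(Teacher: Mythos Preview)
Your overall strategy coincides with the paper's: establish the relation between $\delta^{X_\sigma} f_\sigma$ and the cochains $(f_{\sigma\setminus v})^v$, then combine coboundary expansion of the link with local minimality of $f_\sigma$. Your mod-$2$ identity is a clean packaging of what the paper phrases set-theoretically (namely, $\delta(f_\sigma)\subseteq\bigcup_{v\in\sigma}(f_{\sigma\setminus v})^v$), and both yield the same inequality after the triangle bound.

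The divergence, and the gap, is in the local-minimality step. You aim for $\lVert f_\sigma\rVert\le\lVert f_\sigma+z\rVert$ for all $z\in Z^{k-\ell-1}(X_\sigma)$, but coboundary expansion only requires this for $z\in B^{k-\ell-1}(X_\sigma)$, and that weaker statement has a one-line lift valid for every $\ell$: writing $z=\delta^{X_\sigma}h$, define $h^\ast\in C^{k-1}(X)$ by $h^\ast(\rho)=h(\rho\setminus\sigma)$ if $\sigma\subset\rho$ and $0$ otherwise; then $\delta h^\ast$ is supported exactly on $k$-faces containing $\sigma$ and $(\delta h^\ast)_\sigma=z$, so $\lVert f+\delta h^\ast\rVert<\lVert f\rVert$ contradicts global minimality. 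No leaking, no flag induction. This is precisely what the paper invokes with the line ``$f$ is minimal and hence also locally minimal''. (Since the links are coboundary expanders one has $Z=B$ in $X_\sigma$ anyway, so your stronger target is not actually stronger; you just never exploit this.)

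Your proposed direct lift of a cocycle already leaks at $\ell=0$. You verify that $\delta z^\ast$ vanishes on $X_v(k)$, but that is not all of the ``outside'': a $k$-face $\rho$ with $v\notin\rho$ and $\rho\cup v\notin X(k+1)$ may still have some subface $\rho\setminus u$ lying in $X_v(k-1)$, in which case $\delta z^\ast(\rho)$ picks up $z(\rho\setminus u)$ without the remaining terms needed to invoke $\delta z=0$. So the inductive flag argument you sketch for higher $\ell$ is addressing a difficulty that disappears once you lift the preimage $h$ rather than $z$ itself.
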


\begin{theorem}[Lower bound on cohomology elements]\label{intro-thm:lower-bound-cohomologies}
For a good enough one-sided local spectral expander whose links are topological expanders, every $k$-cohomology element must be of density at least $\beta^k/(k+1)!$, where $\beta$ is the expansion constant in the links of the complex.
\end{theorem}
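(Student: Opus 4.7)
The plan is to prove the bound by induction on $k$, descending from the ambient complex into a vertex link at each step. The observation driving the induction is that Theorem~\ref{intro-thm:cohomologies-double-balanced} applied inside a link $X_v$ says any $(k-1)$-cohomology element of $X_v$ is $(k/\beta)$-double balanced, improving on the $(k+1)/\beta$ constant available for $f$ in $X$; iterating this gain across the $k$ levels of induction produces precisely the factorial $(k+1)!$ in the denominator of the target bound. This is strictly better than the $\beta^k/(k+1)^{k+1}$ one would obtain from a single direct application of Theorem~\ref{intro-thm:some-delta1-expansion} to $f$ in $X$, which is why a single-shot argument cannot suffice.

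For the inductive step, let $f$ be a minimum-weight representative of a nonzero $k$-cohomology class; by Theorem~\ref{intro-thm:cohomologies-double-balanced} it is $((k+1)/\beta)$-double balanced. Because coboundary commutes with localization, $f_v$ is a $(k-1)$-cocycle in $X_v$; moreover, a quick use of the cocycle identity $\delta^k f(\rho)=0$ at $(k+1)$-faces $\rho$ through $v$ shows $f_v = \emptyset \Rightarrow f^v = \emptyset$, so vertices outside the support contribute zero to any weighted average of $|f^v|/|X_v(k)|$. Assuming $f_v$ represents a nonzero $(k-1)$-cohomology class in $X_v$ whenever $f_v \neq 0$ (the key technical input, discussed below), the inductive hypothesis gives $|f_v|/|X_v(k-1)| \ge \beta^{k-1}/k!$, and the dimension-$0$ double-balance inequality rearranges to
$$\frac{|f^v|}{|X_v(k)|} \;\ge\; \frac{\beta}{k+1}\cdot\frac{|f_v|}{|X_v(k-1)|}\;\ge\;\frac{\beta^k}{(k+1)!}$$
for every $v$ with $f_v \neq 0$. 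Averaging under the weighted distribution on $X(0)$ where $v$ is drawn with probability proportional to $|X_v(k)|$, one has $\E_v[|f^v|/|X_v(k)|] = |f|/|X(k)|$, and the theorem follows provided the support of $f$ has full measure in this distribution. The induction is anchored at $k=0$ (vacuous since $H^0(X)=0$ for connected $X$) and at $k=1$ (which can be handled directly from Theorem~\ref{intro-thm:some-delta1-expansion} up to constants).

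The main obstacle --- and the step I expect to be most delicate --- is the pair of structural claims used above: first, that $f_v$ is not a coboundary in $X_v$ whenever $f_v \neq 0$, and second, that a minimum-weight cohomology representative has full support on the weighted vertex distribution. Both should be driven by the minimality of $f$ in its class: if $f_v$ were a coboundary $\delta^{k-2} h_v$ in $X_v$, or if $f$ avoided a positive-measure set of vertices, one should be able to modify $f$ by a globally defined coboundary $\delta^{k-1}H$ that strictly decreases its weight, contradicting minimality. Making this modification precise --- localizing the required cochain near $v$ without introducing compensating weight in the surrounding links --- is the technical heart of the argument, and is precisely where the topological-expansion hypothesis on the links should be invoked beyond its role in Theorem~\ref{intro-thm:cohomologies-double-balanced}.
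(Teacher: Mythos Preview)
Your inductive scheme breaks at the very first structural claim: the localization $f_v$ of a $k$-cocycle $f$ is \emph{not} a $(k-1)$-cocycle in $X_v$. Over $\F_2$, expanding $0=\delta f(v\cup\tau)$ for $\tau\in X_v(k)$ gives $\delta(f_v)(\tau)=f^v(\tau)$, i.e.\ $\delta(f_v)=f^v$, which is exactly the identity underlying the proof of Theorem~\ref{intro-thm:cohomologies-double-balanced}. So ``coboundary commutes with localization'' is false, and when $f_v\neq 0$ you typically have $\delta(f_v)=f^v\neq 0$; in particular $f_v$ is not a cocycle, let alone a nontrivial cohomology class, and the inductive hypothesis cannot be invoked on it. Your subsidiary observation that $f_v=\emptyset\Rightarrow f^v=\emptyset$ is correct (and follows from this same identity), but it does not rescue the step. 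The second structural claim --- that a minimum-weight representative has full vertex support --- is also generally false and cannot be repaired by subtracting a coboundary: nothing forces every vertex to meet the support of a minimal cocycle.

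The paper avoids all of this and stays in the ambient complex. You correctly note that a single-shot application of Theorem~\ref{intro-thm:some-delta1-expansion} with $\alpha=(k+1)/\beta$ only yields $\beta^k/(k+1)^{k+1}$. The actual route to $(k+1)!$ is to observe that Theorem~\ref{intro-thm:cohomologies-double-balanced} gives the sharper per-level constant $\alpha_\ell=(\ell+1)/\beta$ in dimension $\ell$, and then rerun the proof of Theorem~\ref{intro-thm:some-delta1-expansion} using $\alpha_\ell$ at level $\ell$ instead of the uniform worst case. The threshold $\eta_{-1}$ then picks up the product $\prod_{\ell=0}^{k-1}\alpha_\ell = k!/\beta^k$ rather than $((k+1)/\beta)^k$, and together with the remaining factor of $k+1$ this yields exactly $(k+1)!/\beta^k$ in the denominator. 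If $\norm{f}$ were below this, $\norm{\delta_1(f)}>0$ would contradict $\delta f=0$.
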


\begin{remark*}
The current state of the art lower bound on the size of cohomologies prior to this work is
$\approx (\beta^k/k!)^{2^k}$ \cite[Lemma~3.10]{KM21}.
\end{remark*}

\subsection{Organization}
In section 2 we provide the required preliminaries. In section 3 we introduce the formal definition of double balanced sets and prove its inheritance property. In section 4 we show that small double balanced sets in one-sided local spectral expanders have the strong $\delta_1$-expansion property, and also explain how to prove lemma~\ref{intro-lem:relation-to-common-def}. In section 5 we show that cohomologies in a complex with topological expanding links are double balanced, obtaining an exponential improvement upon the current state of the art lower bound on their minimal distance.

\section{Preliminaries}

\subsection{Simplicial complexes}
Recall that a $d$-dimensional simplicial complex $X$ is a downwards closed $(d+1)$-hypergraph. A $k$-face of $X$ is a hyperedge of size $k+1$, and the set of $k$-faces of $X$ is denoted by $X(k)$. An assignment of values from $\F_2$ to the $k$-faces, $k\le d$, is called a $k$-cochain, and the space of all $k$-cochains over $\F_2$ is denoted by $C^k(X)$.

Any assignment to the $k$-faces $f \in C^k(X)$ induces an assignment to the $(k+1)$-faces by the coboundary operator $\delta$. For any $(k+1)$-face $\sigma = \{v_0,\dotsc,v_{k+1}\}$, $\delta(f)(\sigma)$ is defined by
$$\delta(f)(\sigma) = \sum_{i=0}^{k+1}f(\sigma \setminus \{v_i\})\quad (\mbox{mod } 2).$$

The kernel of the coboundary operator is called the \emph{$k$-cocycles} and denoted by
$$Z^k(X) = \{f \in C^k(X) \;|\; \delta(f) = \mathbf{0} \}.$$
The image of $\delta$ is called the $k$-coboundaries and denote by
$$B^k(X) = \{\delta(f) \;|\; f \in C^{k-1}(X) \}.$$
One can check that $\delta(\delta(f)) = \mathbf{0}$ always holds, hence $B^k(X) \subseteq Z^k(X) \subseteq C^k(X)$. The quotient space $Z^k(X)/B^k(X)$ is called the $k$-cohomologies and denoted by $H^k(X)$.

For a $d$-dimensional simplicial complex $X$, let $P_d:X(d) \to \R_{\ge 0}$ be a probability distribution over the $d$-faces of the complex. For simplicity, we will assume in this work that $P_d$ is the uniform distribution. This probability distribution over the $d$-faces induces a probability distribution $P_k$ for every dimension $k < d$ by selecting a $d$-face $\sigma_d$ according to $P_d$ and then selecting a $k$-face $\sigma_k \subset \sigma_d$ uniformly at random.

The weight of any $k$-cochain $f \in C^k(X)$ is defined by
$$\norm{f} = \Pr_{\sigma \sim P_k}[f(\sigma) \ne 0],$$
i.e., the (weighted) fraction of non-zero elements in $f$. The distance between two $k$-cochains $f,g \in C^k(X)$ is defined as $\dist(f,g) = \norm{f - g}$.

We also add a useful definition of a mutual weight of two cochains. For $\ell < k$ and two cochains $f \in C^k(X)$, $g \in C^\ell(X)$ we define their mutual weight by
$$\norm{(f,g)} = \Pr_{\sigma_k \sum P_k, \sigma_\ell \subset \sigma_k}[f(\sigma) \ne 0 \wedge \sigma_\ell \ne 0],$$
where $\sigma_k$ is chosen according to the distribution $P_k$ and $\sigma_\ell$ is an $\ell$-face chosen uniformly from $\sigma_k$ (i.e., $\sigma_\ell$ is chosen according to $P_\ell$ conditioned on $\sigma_k$ being chosen).

\subsection{Cosystolic and coboundary expansion}
Coboundary expansion has been introduced by Linial and Meshulam~\cite{LM06} and independently by Gromov~\cite{Gro10}. It is a generalization of edge expansion of graphs to higher dimensions.

\begin{definition}[Coboundary expansion]
	A $d$-dimensional simplicial complex $X$ is said to be an $\varepsilon$-coboundary expander if for every $k < d$ and $f \in C^k(X)\setminus B^k(X)$ it holds that
	$$\frac{\norm{\delta(f)}}{\dist(f,B^k(X))} \ge \varepsilon,$$
	where $\dist(f,B^k(X)) = \min\{\dist(f,g) \;|\; g \in  B^k(X) \}$.
\end{definition}

Cosystolic expansion is similar to coboundary expansion, with the main difference that it can have non-trivial cohomologies as long as they are large.

\begin{definition}[Cosystolic expansion]
	A $d$-dimensional simplicial complex $X$ is said to be an $(\varepsilon,\mu)$-cosystolic expander if for every $k < d$:
	\begin{enumerate}
		\item For any $f \in C^k(X)\setminus Z^k(X)$ it holds that
		$$\frac{\norm{\delta(f)}}{\dist(f,Z^k(X))} \ge \varepsilon,$$
		where $\dist(f,Z^k(X)) = \min\{\dist(f,g) \;|\; g \in  Z^k(X) \}$.
		\item For any $f \in Z^k(X)\setminus B^k(X)$ it holds that $\norm{f} \ge \mu$.
	\end{enumerate}
\end{definition}

\subsection{Links, localization and restriction}
For every face $\sigma \in X$, its local view, also called its \emph{link}, is a $(d-|\sigma|-1)$-dimensional simplicial complex defined by $X_\sigma = \{\tau \setminus \sigma \;|\; \sigma \subseteq \tau \in X\}$. The probability distribution over the top faces of $X_\sigma$ is induced from the probability distribution of $X$, where for any top face $\tau \in X_\sigma(d-|\sigma|-1)$, its probability is the probability to choose $\sigma \cup \tau$ in $X$ conditioned on choosing $\sigma$. Since we assume in this work that the probability distribution over the top faces of $X$ is the uniform distribution, it follows that the probability distribution over the top faces of $X_\sigma$ is the uniform distribution.

For any $k$-cochain $f \in C^k(X)$ and an $\ell$-face $\sigma \in X(\ell)$, the \emph{localization} of $f$ to the link of $\sigma$ is a $(k-\ell-1)$-cochain in the link of $\sigma$, $f_\sigma \in C^{k-\ell-1}(X_\sigma)$ defined by $$f_{\sigma}(\tau) = f(\sigma\cup\tau).$$ The \emph{restriction} of $f$ to the link of $\sigma$ is a $k$-cochain in the link of $\sigma$, $f^\sigma \in C^k(X_\sigma)$ defined by $$f^\sigma(\tau) = f(\tau).$$

\subsection{Local spectral expansion}
Another notion of high dimensional expansion, called \emph{local spectral expansion} is concerned with the spectral properties of the links of the complex.

\begin{definition}[Two-sided local spectral expansion]
	A $d$-dimensional simplicial complex $X$ is called a \emph{$\lambda$-two-sided local spectral expander}, $\lambda > 0$, if for every $k \le d-2$ and $\sigma \in X(k)$, the underlying graph\footnote{The graph whose vertices are $X_\sigma(0)$ and its edges are $X_\sigma(1)$.} of $X_\sigma$ is a $\lambda$-two-sided spectral expander, i.e., its spectrum is bounded from above by $\lambda$ and from below by $-\lambda$.% satisfies $\lambda_2 \le \lambda$ and $\lambda_n \ge -\lambda$, where $\lambda_1 \ge \lambda_2 \ge \dotsb \ge \lambda_n$ are its eigenvalues.
\end{definition}

\begin{definition}[One-sided local spectral expansion]
	A $d$-dimensional simplicial complex $X$ is called a \emph{$\lambda$-one-sided local spectral expander}, $\lambda > 0$, if for every $k \le d-2$ and $\sigma \in X(k)$, the underlying graph$^4$ of $X_\sigma$ is a $\lambda$-one-sided spectral expander, i.e., its spectrum is bounded from above by $\lambda$.% satisfies $\lambda_2 \le \lambda$, where $\lambda_1 \ge \lambda_2 \ge \dotsb \ge \lambda_n$ are its eigenvalues.
\end{definition}

\subsection{Minimal and locally minimal cochains}
One of the technical notions we use in this work is the notion of a minimal cochain. We say that a $k$-cochain $f \in C^k(X)$ is \emph{minimal} if its weight cannot be reduced by adding a coboundary to it, i.e., for every $g \in B^k(X)$ it holds that $\norm{f} \le \norm{f-g}$. Recall that the distance of $f$ from the coboundaries is defined by
$\dist(f, B^k(X)) = \min \{\norm{f-g} \;|\; g \in B^k(X) \}$. Since $\mathbf{0} \in B^k(X)$, it follows that for every $f \in C^k(X)$, $\norm{f} \ge \dist(f,B^k(X))$. Thus, $f$ is said to be \emph{minimal} if and only if $\norm{f} = \dist(f, B^k(X))$.

We also define the notion of a locally minimal cochain, where we say that $f \in C^k(X)$ is \emph{locally minimal} if for every vertex $v$, the localization of $f$ to the link of $v$ is minimal in the link, i.e., $f_v$ is minimal in $X_v$ for every $v \in X(0)$. It is not hard to check that any minimal cochain is also locally minimal.

\section{Double balanced sets}
We start by providing the formal definition of a double balanced cochain. Recall that for any $k$-cochain $f \in C^k(X)$ and a vertex $u \in X(0)$, we denote by $f^u$ the restriction of $f$ to the $k$-faces in the link of $u$, i.e., $f^u \in C^k(X_u)$.

\begin{definition}[Double balanced cochains]
	Let $X$ be a $d$-dimensional simplicial complex. A $k$-cochain $f \in C^k(X)$ is said to be \emph{$\alpha$-double balanced in dimension $\ell$}, where $\alpha \ge 1$ and $0 \le \ell \le k-1$, if for every $\ell$-face $\sigma \in X(\ell)$ it holds that
	$$\norm{f_\sigma} \le \alpha\E_{u \in \sigma}\norm{(f_{\sigma\setminus u})^u}.$$
	$f$ is said to be \emph{$\alpha$-double balanced} if $f$ is $\alpha$-double balanced in dimension $\ell$ for every $\ell < k$.
\end{definition}

\subsection{Balance inheritance}
An interesting property that applies to double balanced cochains is that it is inherited by lower dimensions. We show that a cochain of $k$-faces which is double balanced in dimension $\ell$ is also double balanced in all dimensions below $\ell$. We prove lemma~\ref{intro-lem:double-balanced-inheritance} from the introduction, which we restate here for convenience.

\begin{lemma}[Double balance inheritance]\label{lem:double-balanced-inheritance}
Let $f \in C^k(X)$ be an $\alpha$-double balanced cochain in dimension $\ell$. Then $f$ is $\alpha'$-double balanced in dimension $\ell-1$, where
$$\alpha' = \frac{\alpha\ell}{\ell+1-\alpha}.$$
\end{lemma}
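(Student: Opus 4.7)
The plan is to deduce the dimension-$(\ell-1)$ inequality at a fixed $(\ell-1)$-face $\tau$ by applying the dimension-$\ell$ hypothesis to each $\ell$-face of the form $\tau \cup w$, where $w$ ranges over $X_\tau(0)$, and then averaging over $w$. Splitting the average over $u \in \tau \cup w$ in the hypothesis into the $u = w$ contribution and the $u \in \tau$ contributions, it gives
$$\|f_{\tau \cup w}\| \le \frac{\alpha}{\ell+1}\left[\|(f_\tau)^w\| + \sum_{u \in \tau}\|(f_{(\tau \setminus u) \cup w})^u\|\right].$$

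The key technical ingredient I will use is the averaging identity, valid for any cochain $g$ on the link $X_\tau$:
$$\E_{w \sim P_0^{X_\tau}} \|g_w\| \;=\; \|g\| \;=\; \E_{w \sim P_0^{X_\tau}}\|g^w\|.$$
This is a direct counting statement under the standing assumption that top faces are uniformly distributed: both expectations unfold into the same weighted sum over top faces of $X_\tau$ containing a fixed subface, and the normalizations match. I will invoke it three times. First, using the identification $f_{\tau \cup w} = (f_\tau)_w$ and taking $g = f_\tau$, I obtain $\E_w \|f_{\tau \cup w}\| = \|f_\tau\|$ and $\E_w \|(f_\tau)^w\| = \|f_\tau\|$. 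Second, for each $u \in \tau$, I use the identification
$$(f_{(\tau \setminus u) \cup w})^u \;=\; \bigl((f_{\tau \setminus u})^u\bigr)_w,$$
which holds because both sides evaluate on $\mu \in X_{\tau \cup w}(k-\ell)$ as $f(\mu \cup w \cup (\tau \setminus u))$; applying the averaging identity to $g = (f_{\tau \setminus u})^u \in C^{k-\ell+1}(X_\tau)$ then yields $\E_w \|(f_{(\tau \setminus u) \cup w})^u\| = \|(f_{\tau \setminus u})^u\|$.

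Combining everything, and abbreviating $L = \|f_\tau\|$ and $R = \E_{u \in \tau}\|(f_{\tau \setminus u})^u\|$, the averaged inequality becomes
$$L \le \frac{\alpha}{\ell+1}\bigl[L + \ell R\bigr],$$
which rearranges to $L(\ell+1-\alpha) \le \alpha\ell R$, i.e., $\|f_\tau\| \le \alpha' R$ with $\alpha' = \alpha\ell/(\ell+1-\alpha)$. Since $\tau$ was arbitrary, $f$ is $\alpha'$-double balanced in dimension $\ell-1$. The main obstacle I anticipate is the verification of the averaging identity $\E_w\|g_w\| = \|g\| = \E_w\|g^w\|$, which is a routine but bookkeeping-heavy counting argument over the uniform distribution on top faces; once that identity is in hand, the rest is a single line of algebra, and the precise form of $\alpha'$ is dictated by the split $\ell+1 = 1 + \ell$ between the lone $u = w$ term and the $\ell$ terms indexed by $u \in \tau$.
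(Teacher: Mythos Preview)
Your proposal is correct and is essentially identical to the paper's own proof: the paper also fixes $\tau\in X(\ell-1)$, writes $\norm{f_\tau}=\E_{u\in X_\tau(0)}\norm{f_{\tau u}}$, applies the dimension-$\ell$ hypothesis, splits the inner expectation over $v\in\tau u$ into the $v=u$ term and the $v\in\tau$ terms, and then uses the same two averaging identities $\E_u\norm{(f_\tau)^u}=\norm{f_\tau}$ and $\E_u\norm{(f_{\tau u\setminus v})^v}=\norm{(f_{\tau\setminus v})^v}$ (which the paper summarizes as ``laws of probability'') before rearranging. Your explicit verification of the identification $(f_{(\tau\setminus u)\cup w})^u=((f_{\tau\setminus u})^u)_w$ and of the averaging identities is exactly the bookkeeping the paper leaves implicit.
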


\begin{proof}
	Let $\tau \in X(\ell-1)$.
%	\begin{align*}
%		\norm{f_\tau} &=
%		\E_{u \in X_\tau(0)}[\norm{f_{\tau u}}] \\&\le
%		\E_{u \in X_\tau(0)}\left[\alpha\E_{v \in \tau u}\big[\norm{(f_{\tau u \setminus v})^v}\big]\right] \\&=
%		\E_{u \in X_\tau(0)}\left[\frac{\alpha}{\ell+1}\sum_{v \in \tau u}\norm{(f_{\tau u \setminus v})^v}\right] \\&=
%		\frac{\alpha}{\ell+1}\E_{u \in X_\tau(0)}\left[\norm{(f_\tau)^u} + \sum_{v \in \tau}\norm{(f_{\tau u \setminus v})^v}\right] \\&=
%		\frac{\alpha}{\ell+1}\left(\E_{u \in X_\tau(0)}[\norm{(f_\tau)^u}] + \sum_{v \in \tau}\E_{u \in X_\tau(0)}[\norm{(f_{\tau u \setminus v})^v}]\right) \\&=
%		\frac{\alpha}{\ell+1}\left(\norm{f_\tau} + \sum_{v \in \tau}\norm{(f_{\tau \setminus v})^v}\right)
%	\end{align*}
	
	\begin{align*}
		\norm{f_\tau} &=
		\E_{u \in X_\tau(0)}[\norm{f_{\tau u}}] \\[5pt]&\le
		\E_{u \in X_\tau(0)}\left[\alpha\E_{v \in \tau u}\big[\norm{(f_{\tau u \setminus v})^v}\big]\right] \\[5pt]&=
		\E_{u \in X_\tau(0)}\left[\frac{\alpha}{\ell+1}\norm{(f_\tau)^u} + \frac{\alpha\ell}{\ell+1}\E_{v \in \tau}\big[\norm{(f_{\tau u \setminus v})^v}\big]\right] \\[5pt]&=
		\frac{\alpha}{\ell+1}\E_{u \in X_\tau(0)}[\norm{(f_\tau)^u}] + \frac{\alpha\ell}{\ell+1}\E_{v \in \tau}\left[\E_{u \in X_\tau(0)}\big[\norm{(f_{\tau u \setminus v})^v}\big]\right] \\[5pt]&=
		\frac{\alpha}{\ell+1}\norm{f_\tau} + \frac{\alpha\ell}{\ell+1}\E_{v \in \tau}\norm{(f_{\tau \setminus v})^v},
	\end{align*}
where the inequality follows since $f$ is $\alpha$-double balanced in dimension $\ell$ and all the other steps follow from laws of probability.
	This implies that
	\begin{align*}
		\norm{f_\tau} \le
		\frac{\alpha\ell}{\ell + 1 - \alpha}\E_{v \in \tau}\norm{(f_{\tau \setminus v})^v}.
	\end{align*}
\end{proof}

It is worth to note that when $f$ is perfectly double balanced, i.e., when $\alpha=1$, then lemma~\ref{lem:double-balanced-inheritance} implies that $f$ is also perfectly double balanced in all dimensions below $\ell$. In other words, perfect double balance is inherited by lower dimensions \emph{without any loss}.

\begin{corollary}
	Let $f \in C^k(X)$ be a $1$-double balanced cochain in dimension $\ell$. Then $f$ is also $1$-double balanced in all dimensions below $\ell$.
\end{corollary}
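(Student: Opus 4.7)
The plan is to deduce the corollary directly from Lemma~\ref{lem:double-balanced-inheritance} by downward induction on the dimension. The key arithmetic observation is that $\alpha = 1$ is a fixed point of the recursion $\alpha \mapsto \alpha \ell / (\ell + 1 - \alpha)$: substituting $\alpha = 1$ into the formula gives
$$\alpha' \;=\; \frac{1 \cdot \ell}{\ell + 1 - 1} \;=\; \frac{\ell}{\ell} \;=\; 1,$$
so one application of Lemma~\ref{lem:double-balanced-inheritance} carries $1$-double balance in dimension $\ell$ to $1$-double balance in dimension $\ell - 1$.

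From here the proof is a one-line induction. Starting from the hypothesis that $f$ is $1$-double balanced in dimension $\ell$, apply Lemma~\ref{lem:double-balanced-inheritance} to obtain $1$-double balance in dimension $\ell - 1$; then apply it again (with the roles of $\ell$ and $\ell-1$ shifted) to get $1$-double balance in dimension $\ell - 2$; and continue until reaching dimension $0$. Since the inherited parameter remains exactly $1$ at every step, no quantitative loss accumulates, so $f$ is $1$-double balanced in dimension $\ell'$ for every $0 \le \ell' < \ell$, which is the definition of being $1$-double balanced across all dimensions strictly below $\ell$.

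There is no real obstacle here; the content of the corollary is entirely in the fixed-point observation above. The only small care needed is to check that Lemma~\ref{lem:double-balanced-inheritance} remains applicable at each intermediate dimension, which it does since the hypothesis of the lemma is simply that $f$ is $\alpha$-double balanced in the current dimension, and that is exactly what the previous induction step produced. Thus the proof should consist of just stating the fixed-point calculation and invoking the lemma inductively.
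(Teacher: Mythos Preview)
Your proposal is correct and matches the paper's approach exactly: the paper does not even write out a separate proof for this corollary, noting just before it that when $\alpha=1$ the formula of Lemma~\ref{lem:double-balanced-inheritance} yields $\alpha'=1$, so perfect double balance is inherited without loss. Your fixed-point observation and downward induction are precisely this argument made explicit.
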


%It might be easier to understand the above definition by considering first a cochain on edges or triangles.
%	
%\begin{itemize}
%	\item A $1$-cochain (on edges) $f \in C^1(X)$ is said to be $\alpha$-double balanced in dimension $0$ if for every vertex $u \in X(0)$ it holds that $\norm{f_u} \le \alpha\norm{f^u}$.
%	\item A $2$-cochain (on triangles) $f \in C^2(X)$ is said to be $\alpha$-double balanced in dimension $0$ if for every vertex $u \in X(0)$ it holds that $\norm{f_u} \le \alpha\norm{f^u}$, and it said to be $\alpha$-double balanced in dimension $1$ if for every edge $\{u,v\} \in X(1)$ it holds that $\displaystyle\norm{f_{uv}} \le \frac{\alpha}{2}(\norm{(f_u)^v} + \norm{(f_v)^u})$.
%\end{itemize}

\section{$\delta_1$-expansion for small double balanced sets}
In this section we show that small double balanced sets are $\delta_1$-expanding. On one hand, we show that when a double balanced set $f$ is sufficiently small, it has a \emph{nearly optimal} $\delta_1$-expansion. On the other hand, for larger double balanced sets (which are still small, but not that small), we show that they have some $\delta_1$-expansion, i.e., $\norm{\delta_1(f)} > 0$. We prove theorems~\ref{intro-thm:optimal-delta1-expansion} and~\ref{intro-thm:some-delta1-expansion} from the introduction, which we restate here in a formal way. 

\begin{theorem}[Nearly optimal $\delta_1$-expansion for sufficiently small double balanced sets]\label{thm:optimal-delta1-for-small-sets}
	For every $d \ge 2$, $\alpha \ge 1$ and $0 < \varepsilon < 1$ there exists $\lambda = \lambda(d, \alpha, \varepsilon)$ such that the following holds:
	Let $X$ be a $d$-dimensional $\lambda$-one-sided local spectral expander. For any $k$-cochain $f \in C^k(X)$, $1 \le k < d$, such that $f$ is $\alpha$-double balanced and $\displaystyle \norm{f} \le \frac{\varepsilon}{(k+1)^2\alpha^k}$ it holds that
	$$\norm{\delta_1(f)} \ge (k+2)(1-3\varepsilon)\norm{f}.$$
	%$$\lambda^2 \le \frac{\varepsilon^3}{(d-1)d^4(d!)^3\alpha^{2(d-1)}}.$$
\end{theorem}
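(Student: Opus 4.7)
The plan is to start from the elementary identities $\E_{\tau \sim P_{k+1}}[N(\tau)] = (k+2)\norm{f}$ and $\norm{\delta_1(f)} = \Pr_\tau[N(\tau) = 1]$, where $N(\tau) := |\{\sigma \in f : \sigma \subset \tau\}|$ counts the $k$-subfaces of $\tau$ lying in $f$. The crude bound $N\cdot\mathbb{1}[N \ge 2] \le N(N-1) = 2\binom{N}{2}$ yields
$$\norm{\delta_1(f)} \ge (k+2)\norm{f} - 2\E_\tau\binom{N(\tau)}{2},$$
so it suffices to show that $2\E\binom{N}{2} \le 3\varepsilon(k+2)\norm{f}$.

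The next step is to decompose $\E\binom{N(\tau)}{2}$ through $(k-1)$-faces: any pair of distinct $k$-subfaces of a $(k+1)$-face $\tau$ meets in a unique $(k-1)$-face $\sigma_0$, and the two extra vertices form an edge of the link $X_{\sigma_0}$ whose endpoints both lie in $f_{\sigma_0}$. A standard double-counting gives
$$\E_\tau\binom{N(\tau)}{2} = \binom{k+2}{2}\,\E_{\sigma_0 \sim P_{k-1}}\!\left[\frac{e_{X_{\sigma_0}}(f_{\sigma_0})}{|X_{\sigma_0}(1)|}\right],$$
where $e_{X_{\sigma_0}}(f_{\sigma_0})$ counts edges of the underlying graph of $X_{\sigma_0}$ with both endpoints in $f_{\sigma_0}$. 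The one-sided spectral expansion of the link graph, fed into the one-sided expander mixing lemma, bounds the inner ratio by $\norm{f_{\sigma_0}}^2 + \lambda\norm{f_{\sigma_0}}$.

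To control $\E_{\sigma_0}[\norm{f_{\sigma_0}}^2]$ I would invoke Lemma~\ref{intro-lem:relation-to-common-def} at dimension $\ell = k-1$ with the threshold $\varepsilon_\star := \varepsilon/(k+1)$. The hypothesis $\norm{f} \le \varepsilon/((k+1)^2\alpha^k)$ comfortably implies $\norm{f} \le \varepsilon_\star/(k\alpha^{k-1})$ (since $(k+1)\alpha \ge k$ whenever $k \ge 1$ and $\alpha \ge 1$), so the lemma yields $\Pr_{\sigma_0}[\norm{f_{\sigma_0}} \le \varepsilon_\star] \ge 1 - \varepsilon_\star\norm{f}$. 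Splitting the expectation over the good and bad $\sigma_0$'s, bounding $\norm{f_{\sigma_0}}^2 \le \varepsilon_\star\norm{f_{\sigma_0}}$ on the good event and $\norm{f_{\sigma_0}}^2 \le 1$ on the bad one, and using the marginal identity $\E_{\sigma_0\sim P_{k-1}}[\norm{f_{\sigma_0}}] = \norm{f}$ gives $\E[\norm{f_{\sigma_0}}^2] \le \varepsilon_\star\norm{f} + \varepsilon_\star\norm{f} = 2\varepsilon\norm{f}/(k+1)$.

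Combining these pieces,
$$2\E\binom{N}{2} \le (k+1)(k+2)\bigl(2\varepsilon/(k+1) + \lambda\bigr)\norm{f} = 2(k+2)\varepsilon\norm{f} + (k+1)(k+2)\lambda\norm{f},$$
so choosing the expansion parameter $\lambda = \lambda(d,\alpha,\varepsilon)$ small enough that $(k+1)\lambda \le \varepsilon$ (and small enough to apply Lemma~\ref{intro-lem:relation-to-common-def}) absorbs the error into an extra $\varepsilon(k+2)\norm{f}$ and finishes the proof. The main obstacle is a quantitative one: the factor $\binom{k+2}{2}$ that pops out of the $(k-1)$-face decomposition forces us to apply the almost-pseudorandomness lemma with the tightened threshold $\varepsilon/(k+1)$ rather than $\varepsilon$, which is precisely why the hypothesis on $\norm{f}$ carries $(k+1)^2\alpha^k$ in the denominator rather than the $(k+1)\alpha^{k-1}$ one might first guess.
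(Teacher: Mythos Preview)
Your argument is correct and follows essentially the same route as the paper: reduce $\norm{\delta_1(f)}$ to an edge-density bound in the links of $(k-1)$-faces (the paper does this via the $\delta_1/\delta_2$ link identities of Lemma~\ref{lem:delta1-composition-to-links}, you via the equivalent second-moment bound $N\mathbb{1}[N\ge 2]\le N(N-1)$), control that edge density by one-sided expansion in the links (Lemma~\ref{lem:bound-delta1-delta2-in-links} in the paper), and then bound the contribution of dense $(k-1)$-faces using double balance---your appeal to Lemma~\ref{intro-lem:relation-to-common-def} with threshold $\varepsilon/(k+1)$ is exactly the paper's Proposition~\ref{prop:bound-dense-faces} with $\eta=\varepsilon/(k+1)$. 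The only difference is packaging; the underlying decomposition, the use of link expansion, and the iterative dense-face bound are identical.
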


\begin{theorem}[Some $\delta_1$-expansion for small double balanced sets]\label{thm:some-delta1-for-small-sets}
	For every $d \ge 2$, $\alpha \ge 1$ and $0 < \varepsilon < 1$ there exists $\lambda = \lambda(d, \alpha, \varepsilon)$ such that the following holds:
	Let $X$ be a $d$-dimensional $\lambda$-one-sided local spectral expander. For any $k$-cochain $f \in C^k(X)$, $1 \le k < d$, such that $f$ is $\alpha$-double balanced and $\displaystyle \norm{f} \le \frac{1-\varepsilon}{(k+1)\alpha^k}$ it holds that
	$$\norm{\delta_1(f)} > 0.$$
	%$$\lambda^2 \le \frac{\varepsilon^3}{(d-1)d^4(d!)^3\alpha^{2(d-1)}}.$$
\end{theorem}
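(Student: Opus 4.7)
The plan is to argue by contradiction. Suppose $\norm{\delta_1(f)} = 0$, so that for every $(k+1)$-face $\rho \in X(k+1)$ the number $N(\rho) := |\{\tau \in f : \tau \subset \rho,\, |\tau| = k+1\}|$ takes values in $\{0,2,3,\ldots\}$, which forces $N(\rho)(N(\rho)-1) \ge N(\rho)$ pointwise. Averaging over $\rho$ gives the lower bound
\[
\E_{\rho \in X(k+1)}[N(\rho)(N(\rho)-1)] \;\ge\; \E_\rho[N(\rho)] \;=\; (k+2)\norm{f},
\]
and the rest of the argument will derive an upper bound on $\E_\rho[N(N-1)]$ that is inconsistent with this lower bound under the hypothesis $\norm{f} \le (1-\varepsilon)/((k+1)\alpha^k)$.

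First I would reindex: each ordered pair of distinct $k$-sub-faces $(\tau,\tau')$ of a common $(k+1)$-face corresponds bijectively to a $(k-1)$-face $\sigma := \tau \cap \tau'$ together with a directed edge $(w,v) \in X_\sigma(1)$, where $\{w\} = \tau \setminus \sigma$ and $\{v\} = \tau' \setminus \sigma$. This change of variables rewrites
\[
\E_\rho[N(N-1)] \;=\; (k+2)(k+1)\cdot \E_{\sigma \in X(k-1)}\Pr_{(w,v) \in X_\sigma(1)}\bigl[\sigma \cup \{w\},\,\sigma \cup \{v\} \in f\bigr].
\]
The underlying graph of each link $X_\sigma$ is a $\lambda$-one-sided spectral expander, so applying the one-sided expander mixing lemma to the vertex set $f_\sigma \subseteq X_\sigma(0)$ yields $\Pr_{(w,v) \in X_\sigma(1)}[w, v \in f_\sigma] \le \norm{f_\sigma}^2 + \lambda\norm{f_\sigma}$. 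Averaging over $\sigma$ and using the standard identity $\E_\sigma\norm{f_\sigma} = \norm{f}$ gives
\[
\E_\rho[N(N-1)] \;\le\; (k+2)(k+1)\bigl(\E_\sigma\norm{f_\sigma}^2 + \lambda\norm{f}\bigr).
\]

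The technical heart of the argument is the second-moment bound $\E_\sigma\norm{f_\sigma}^2 \le \alpha^k \norm{f}^2 + O_k(\lambda)\norm{f}$, which is where the $\alpha^k$ factor in the theorem's hypothesis enters. I would derive this by iterating the $\alpha$-double-balance inequality $\norm{f_\sigma} \le \alpha\,\E_{u \in \sigma}\norm{(f_{\sigma\setminus u})^u}$ down through the dimensions linearly: at dimension $\ell$, multiply the inequality by a single power of $\norm{f_\sigma}$ and take the expectation over $\sigma$, then use the averaging identities for restrictions and localizations of cochains (of the type that drive the proof of lemma~\ref{lem:double-balanced-inheritance}) together with the one-sided link expansion to reduce the resulting expression to the analogous second-moment quantity at dimension $\ell-1$, gaining exactly one factor of $\alpha$ and a controlled $O(\lambda)$ error each time. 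After $k$ iterations one reaches the base case at the empty face, where the quantity is simply $\norm{f}^2$.

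Combining the two bounds gives $(k+2)\norm{f} \le (k+2)(k+1)(\alpha^k\norm{f}^2 + O_k(\lambda)\norm{f})$, i.e., $\norm{f} \ge (1 - O_k(\lambda))/((k+1)\alpha^k)$. Choosing $\lambda = \lambda(d, \alpha, \varepsilon)$ small enough that the $O_k(\lambda)$ error is at most $\varepsilon$ contradicts the hypothesis $\norm{f} \le (1-\varepsilon)/((k+1)\alpha^k)$ and completes the proof. The hardest part will be the $\alpha^k$ second-moment bound: double balance is an $L^1$-type inequality, and a careless iteration that applies Jensen or Cauchy--Schwarz at each step accumulates an extra factor of $\alpha$ per dimension, yielding only $\alpha^{2k}$, which is too weak. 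Obtaining the correct $\alpha^k$ scaling requires keeping each application of double balance linear in $\norm{f_\sigma}$ and carefully handling the restriction-versus-localization asymmetry of the cochains $(f_{\sigma\setminus u})^u$ using the one-sided link expansion.
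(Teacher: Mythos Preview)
Your argument is correct and takes a genuinely different route from the paper's. The paper proves Theorem~\ref{thm:some-delta1-for-small-sets} via a two-part $L^\infty$-style stratification: it first shows (Proposition~\ref{prop:bound-delta1-by-dense-faces}) that $\norm{\delta_1(f)}$ is bounded below in terms of the weight of ``dense'' $(k-1)$-faces $\sigma$ (those with $\norm{f_\sigma}$ above a threshold $\eta$), and then (Proposition~\ref{prop:bound-dense-faces}) bounds the weight of dense faces by descending through a hierarchy of thresholds $\eta_{k-1},\eta_{k-2},\dots,\eta_{-1}$, using at each level the double-balance inequality together with the Chebyshev bound of Lemma~\ref{lem:in-front-seen-with-right-fraction}. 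Your approach replaces this threshold machinery with an $L^2$ second-moment argument: the contradiction hypothesis $\delta_1(f)=0$ forces $\E_\rho[N(N-1)]\ge(k+2)\norm{f}$, and you bound $\E_\rho[N(N-1)]$ by $\E_{\sigma\in X(k-1)}\norm{f_\sigma}^2$ via the one-sided mixing lemma in each link, then control this second moment by iterating the double-balance inequality down through dimensions. The key recursion is indeed $\E_{\sigma\in X(\ell)}\norm{f_\sigma}^2 \le \alpha\,\E_{\tau\in X(\ell-1)}\norm{f_\tau}^2 + O_k(\lambda)\norm{f}$: after multiplying by $\norm{f_\sigma}$, changing variables to $\tau=\sigma\setminus u$, and writing $\norm{(f_\tau)^u}=\norm{f_\tau}+(\norm{(f_\tau)^u}-\norm{f_\tau})$, the main term gives $\norm{f_\tau}\E_u\norm{f_{\tau u}}=\norm{f_\tau}^2$ and the cross term is controlled by Cauchy--Schwarz together with the same variance bound $\Var_u\norm{(f_\tau)^u}\le O(\lambda^2)\norm{f_\tau}$ that underlies Lemma~\ref{lem:in-front-seen-with-right-fraction}. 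This is exactly the point where the non-intersecting (swap) walk enters, and it is the same spectral input the paper uses; you simply feed it into Cauchy--Schwarz rather than Chebyshev. Your route is more direct for the bare conclusion $\norm{\delta_1(f)}>0$; the paper's dense-face decomposition, on the other hand, is modular and is reused without change to give the quantitative Theorem~\ref{thm:optimal-delta1-for-small-sets} and the almost-pseudorandomness Lemma~\ref{intro-lem:relation-to-common-def}, both of which are naturally $L^\infty$ statements.
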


We split the proof of these theorems to two parts. In the first part we show that if almost all of the $(k-1)$-faces of a cochain are not dense then its $\delta_1$ is optimal. In the second part, we show that for sufficiently small double balanced cochains, almost all of their $(k-1)$-faces are indeed not dense.

\subsection{Part I - Bound $\delta_1(f)$ by the dense $(k-1)$-faces.}
Let $X$ be a $d$-dimensional $\lambda$-one-sided local spectral expander and $0 < \eta < 1$ a density constant.

For any $k$-cochain $f \in C^k(X)$ we define the set of dense $(k-1)$-faces by
$$\dense_{k-1} = \{\sigma \in X(k-1) \;|\; \norm{f_\sigma} > \eta\}.$$

We show in this section that $\norm{\delta_1(f)}$ can be bounded by the fraction of dense $(k-1)$-faces.%, or in other words, if the fraction of dense $(k-1)$-faces is small enough then $\delta_1(f)$ is optimal.

\begin{proposition}\label{prop:bound-delta1-by-dense-faces}
	Let $X$ be a $d$-dimensional $\lambda$-one-sided local spectral expander and $0 < \eta < 1$ a density constant.
	For any $k$-cochain $f \in C^k(X)$, $1 \le k < d$,
	$$\norm{\delta_1(f)} \ge
	(k+2)\norm{f}\left(1 - (k+1)\Big(\lambda+\eta+\frac{\norm{\dense_{k-1}}}{\norm{f}}\Big)\right).$$
\end{proposition}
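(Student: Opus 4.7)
The plan is to count the $(k+1)$-faces $\tau \supset \sigma$ with $\sigma \in f$ that fail to lie in $\delta_1(f)$, and subtract this quantity from the naive total $(k+2)\|f\|$. Using the standard identity that sampling $\tau \in P_{k+1}$ followed by a uniformly random $k$-subface $\sigma \subset \tau$ reproduces $P_k$ (and setting $v = \tau \setminus \sigma$), I would first write
\begin{equation*}
\norm{\delta_1(f)} = (k+2)\cdot\Pr_{\sigma \in P_k,\, v \in X_\sigma(0)}\big[\sigma \in f \wedge (\sigma \setminus u) \cup v \notin f \text{ for all } u \in \sigma\big].
\end{equation*}
The indicator of $\exists u \in \sigma$ is pointwise at most the sum over the $k+1$ vertices of $\sigma$, which after taking expectation reads
\begin{equation*}
\norm{\delta_1(f)} \ge (k+2)\Big(\norm{f} - (k+1)\Pr_{\sigma,\, u \in \sigma,\, v \in X_\sigma(0)}\big[\sigma \in f \wedge (\sigma \setminus u) \cup v \in f\big]\Big).
\end{equation*}

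Next I would change variables by setting $\sigma' = \sigma \setminus u \in X(k-1)$. Because $P_d$ is uniform, the triple $(\sigma, u, v)$ sampled above is equivalent to sampling $\sigma' \in P_{k-1}$ and then drawing $\{u,v\}$ as a random $1$-face of the link $X_{\sigma'}$ according to the induced $P_1$-measure on $X_{\sigma'}(1)$. The bad event translates to $f_{\sigma'}(u) = f_{\sigma'}(v) = 1$, so
\begin{equation*}
\Pr_{\sigma, u, v}\big[\sigma \in f \wedge (\sigma \setminus u) \cup v \in f\big] = \E_{\sigma' \in P_{k-1}}\Pr_{\{u,v\} \in X_{\sigma'}(1)}\!\big[f_{\sigma'}(u) = f_{\sigma'}(v) = 1\big].
\end{equation*}
Splitting the outer expectation by whether $\sigma' \in \dense_{k-1}$, the dense part contributes at most $\norm{\dense_{k-1}}$ (bounding the inner probability by $1$). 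For a non-dense $\sigma'$ we have $\norm{f_{\sigma'}} \le \eta$, and the link graph is a $\lambda$-one-sided spectral expander, so the one-sided expander mixing lemma yields the inner probability $\le \norm{f_{\sigma'}}^2 + \lambda\norm{f_{\sigma'}} \le (\eta+\lambda)\norm{f_{\sigma'}}$. Averaging and using the basic identity $\E_{\sigma' \in P_{k-1}}\norm{f_{\sigma'}} = \norm{f}$ bounds the non-dense contribution by $(\eta+\lambda)\norm{f}$.

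Combining gives $\Pr_{\sigma,u,v}[\cdots] \le \norm{\dense_{k-1}} + (\eta+\lambda)\norm{f}$; plugging this into the union-bound inequality produces exactly
$\norm{\delta_1(f)} \ge (k+2)\norm{f}\big(1 - (k+1)(\lambda + \eta + \norm{\dense_{k-1}}/\norm{f})\big)$.
The main obstacle I anticipate is the measure-theoretic bookkeeping: confirming under $P_d$-uniform that the joint law of $(\sigma, u, v)$ with $\sigma \in P_k$, $u \in \sigma$ uniform, $v \in X_\sigma(0)$ really does coincide with sampling $\sigma' \in P_{k-1}$ and then an ordered endpoint pair of a $P_1$-random edge of $X_{\sigma'}$, so that the expander mixing lemma applies to the correct edge distribution in each link. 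Once this identification is in place, the argument reduces to the union bound above and a single application of the one-sided expander mixing lemma.
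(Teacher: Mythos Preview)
Your argument is correct, and at heart it is the same localization-to-$(k{-}1)$-faces strategy as the paper: in the link of a random $\sigma'\in X(k-1)$, the cochain $f_{\sigma'}$ becomes a vertex set, and one-sided spectral expansion of the link graph controls the probability that a random edge has both endpoints in $f_{\sigma'}$. The difference is only in packaging. The paper proves a separate combinatorial identity (its Lemma~4.2) expressing $\norm{\delta_1(f)}$ in terms of $\E_{\sigma'}\norm{\delta_1(f_{\sigma'})}$ and $\E_{\sigma'}\norm{\delta_2(f_{\sigma'})}$, and then invokes the Cheeger inequality in each link (its Lemma~4.3) to bound these two quantities. Your route is more direct: the single line $\norm{\delta_1(f)} \ge (k+2)\big(\norm{f} - (k+1)\Pr[\text{bad}]\big)$ obtained by the union bound is in fact \emph{equivalent} to the paper's Lemma~4.2 once one uses the link identity $\norm{\delta_1(f_{\sigma'})} + 2\norm{\delta_2(f_{\sigma'})} = 2\norm{f_{\sigma'}}$, and your application of the one-sided expander mixing lemma plays exactly the role of the paper's Cheeger step. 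So your proof is a cleaner repackaging of the same argument, trading two auxiliary lemmas for a probabilistic change of variables; the measure-theoretic identification you flag (that $(\sigma, u\in\sigma, v\in X_\sigma(0))$ has the same law as $(\sigma'\sim P_{k-1}$, ordered edge of $X_{\sigma'})$) is the standard down--up/up--down correspondence and holds under the paper's assumption that $P_d$ is uniform.
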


The proof of this proposition will follow from the following two lemmas. The first lemma holds for any simplicial complex.
\begin{lemma}\label{lem:delta1-composition-to-links}
	Let $X$ be a $d$-dimensional simplicial complex. 
	For any $k$-cochain $f \in C^k(X)$, $1 \le k < d$,
	$$\norm{\delta_1(f)} \ge
	(k+2)\bigg(\frac{1}{2}\sum_{\sigma \in X(k-1)}\norm{(\delta_1(f_\sigma), \sigma)} -
	k\sum_{\sigma \in X(k-1)}\norm{(\delta_2(f_\sigma), \sigma)}
	\bigg)$$
\end{lemma}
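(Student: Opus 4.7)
The plan is to prove this combinatorial statement by double counting over $(k{+}1)$-faces of $X$, parametrized by the quantity $j(\tau):=|\{s\subset\tau:|s|=k{+}1,\;s\in f\}|\in\{0,1,\dots,k{+}2\}$, i.e.\ the number of $k$-subfaces of $\tau$ that lie in $f$. Since the lemma has no spectral hypothesis, the whole identity should live at the level of a single expectation over $\tau$ depending only on $j(\tau)$. Under the uniform distribution this already gives $\norm{\delta_1(f)}=\E_\tau[\mathbf{1}[j(\tau)=1]]$, so everything will reduce to comparing the two sides termwise in $j$.

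To evaluate the right-hand side, I would fix a $(k{+}1)$-face $\tau$ and a $(k{-}1)$-subface $\sigma\subset\tau$ with complementary edge $\{u,v\}=\tau\setminus\sigma$ in the link $X_\sigma$. The localization $f_\sigma$ is a $0$-cochain in $X_\sigma$, i.e.\ a set of vertices, so $\{u,v\}\in\delta_1(f_\sigma)$ iff exactly one of $\sigma\cup\{u\},\sigma\cup\{v\}$ belongs to $f$, and $\{u,v\}\in\delta_2(f_\sigma)$ iff both do. Calling a vertex $w\in\tau$ \emph{bad} when $\tau\setminus\{w\}\in f$ (there are $j=j(\tau)$ such), the number of $(k{-}1)$-subfaces of $\tau$ contributing to $\delta_1(f_\sigma)$ is $j(k{+}2{-}j)$ and to $\delta_2(f_\sigma)$ is $\binom{j}{2}$, out of $\binom{k+2}{2}$ total $(k{-}1)$-subfaces. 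Unpacking $\norm{(\cdot,\sigma)}$ as the joint probability that $\tau\sim P_{k+1}$ and the uniformly chosen $(k{-}1)$-subface of $\tau$ is exactly $\sigma$, the two sums become
$$\sum_{\sigma\in X(k-1)}\norm{(\delta_1(f_\sigma),\sigma)}\;=\;\E_\tau\!\left[\frac{j(k+2-j)}{\binom{k+2}{2}}\right],\qquad \sum_{\sigma\in X(k-1)}\norm{(\delta_2(f_\sigma),\sigma)}\;=\;\E_\tau\!\left[\frac{\binom{j}{2}}{\binom{k+2}{2}}\right].$$

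Plugging these into the right-hand side, using $\binom{k+2}{2}=(k{+}1)(k{+}2)/2$ together with the identity $(k{+}2{-}j)-k(j{-}1)=(k{+}1)(2{-}j)$, the factor $(k{+}2)$ and the binomial cancel cleanly, and the whole right-hand side collapses to $\E_\tau[j(2-j)]$. The lemma then reduces to the pointwise inequality $\mathbf{1}[j=1]\ge j(2-j)$ for $j\in\{0,1,\dots,k{+}2\}$: both sides agree at $j\in\{0,1,2\}$, and for $j\ge 3$ the right-hand side is strictly negative while the left is $0$. The only step requiring real care is the translation of the mutual-weight notation into the uniform two-stage sampling of $(\tau,\sigma)$, which is where the $\binom{k+2}{2}$ denominator appears; once that bookkeeping is done the rest is mechanical and there is no genuine obstacle.
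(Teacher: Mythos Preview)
Your proof is correct and follows essentially the same approach as the paper: both decompose the two link sums according to the parameter $j(\tau)$ (the paper writes this as $\sum_i \frac{i(k+2-i)}{\binom{k+2}{2}}\norm{\delta_i(f)}$ and $\sum_i \frac{\binom{i}{2}}{\binom{k+2}{2}}\norm{\delta_i(f)}$) and then finish by a termwise inequality. Your endgame is slightly slicker---you collapse the right-hand side to $\E_\tau[j(2-j)]$ and check $\mathbf{1}[j=1]\ge j(2-j)$ pointwise, whereas the paper multiplies the $\delta_2$ identity by $2k$, uses $(i-1)k\ge k+2-i$ for $i\ge 2$, and subtracts---but these are algebraically equivalent rearrangements of the same double count.
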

\begin{proof}
	Denote by $\delta_i(f)$ the set of $(k+1)$-faces that contain exactly $i$ $k$-faces from $f$. Summing $\delta_1(f_\sigma)$ in the links of all $\sigma \in X(k-1)$ equals
	\begin{equation}\label{lem:delta1-composition-to-links-eq1}
		\sum_{\sigma \in X(k-1)}\norm{(\delta_1(f_\sigma), \sigma)} =
		\sum_{i=1}^{k+1}\frac{i(k+2-i)}{\binom{k+2}{2}}\norm{\delta_i(f)}.
	\end{equation}

	Summing $\delta_2(f_\sigma)$ in the links of all $\sigma \in X(k-1)$ equals
	\begin{equation}\label{lem:delta1-composition-to-links-eq2}
		\sum_{\sigma \in X(k-1)}\norm{(\delta_2(f_\sigma), \sigma)} = 
		\sum_{i=2}^{k+2}\frac{\binom{i}{2}}{\binom{k+2}{2}}\norm{\delta_i(f)}.
	\end{equation}
	
	Multiplying~\eqref{lem:delta1-composition-to-links-eq2} by $2k$ yields
	\begin{equation}\label{lem:delta1-composition-to-links-eq3}
		2k\sum_{\sigma \in X(k-1)}\norm{(\delta_2(f_\sigma), \sigma)} = 
		\sum_{i=2}^{k+2}\frac{i(i-1)k}{\binom{k+2}{2}}\norm{\delta_i(f)} \ge
		\sum_{i=2}^{k+2}\frac{i(k+2-i)}{\binom{k+2}{2}}\norm{\delta_i(f)}.
	\end{equation}
	
	Subtracting~\eqref{lem:delta1-composition-to-links-eq3} from~\eqref{lem:delta1-composition-to-links-eq1} yields
	\begin{equation*}
		\sum_{\sigma \in X(k-1)}\norm{(\delta_1(f_\sigma),\sigma)} -
		2k\sum_{\sigma \in X(k-1)}\norm{(\delta_2(f_\sigma), \sigma)} \le
		\frac{2}{k+2}\norm{\delta_1(f)}.
	\end{equation*}

	Multiplying both sides by $(k+2)/2$ finishes the proof.
%	For any $1 \le i \le k+1$ it holds that
%	\newpage
%	\begin{equation*}
%	\begin{aligned}
%		\norm{\delta_i(f)} &=
%		\Pr_{\sigma \in X(k+1)}[\sigma \in \delta_i(f)]\cdot
%		\frac{\Pr_{\tau \in X(k-1)}[\sigma\setminus\tau \in  \delta_1(f_\tau) \;|\; \tau \subset \sigma]}{\Pr_{\tau \in X(k-1)}[\sigma\setminus\tau \in  \delta_1(f_\tau) \;|\; \tau \subset \sigma]} \\&=
%		\Pr_{\sigma \in X(k+1)}[\sigma \in \delta_i(f)]\cdot\Pr_{\tau \in X(k-1)}[\sigma\setminus\tau \in  \delta_1(f_\tau) \;|\; \tau \subset \sigma]\cdot
%		\frac{\binom{k+2}{2}}{i(k+2-i)} \\&=
%		\Pr_{\substack{\sigma \in X(k+1)\\\tau \in X(k-1)}}[]
%	\end{aligned}
%	\end{equation*}
\end{proof}

The following lemma holds for any $\lambda$-one-sided local spectral expander.
\begin{lemma}\label{lem:bound-delta1-delta2-in-links}
	Let $X$ be a $d$-dimensional $\lambda$-one-sided local spectral expander and $0<\eta<1$ a density constant. For any $k$-cochain $f \in C^k(X)$, $1 \le k < d$,
	\begin{enumerate}[label=(\arabic*)]
		\item $\displaystyle \sum_{\sigma \in X(k-1)}\norm{(\delta_1(f_\sigma), \sigma)} \ge
		2(1-\lambda-\eta)\norm{(f, \sparse_{k-1})},$
		
		\item $\displaystyle\sum_{\sigma \in X(k-1)}\norm{(\delta_2(f_\sigma), \sigma)} \le
		\norm{(f, \dense_{k-1})} + (\lambda + \eta)\norm{(f, \sparse_{k-1})},$
	\end{enumerate}
	where $\sparse_{k-1} = X(k-1)\setminus \dense_{k-1}$.
\end{lemma}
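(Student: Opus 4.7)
My plan is to reduce both inequalities to a direct application of the expander mixing lemma in each link graph, and then sum the local estimates against the weights $P_{k-1}(\sigma)$.

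\textbf{Localization.} For a fixed $\sigma \in X(k-1)$, the localization $f_\sigma$ is a $0$-cochain in the link $X_\sigma$, so it is a subset of the vertex set of the underlying graph $G_\sigma$ of $X_\sigma$, with vertex density $p_\sigma := \norm{f_\sigma}$. In this language $\delta_1(f_\sigma)$ is exactly the set of edges of $G_\sigma$ with one endpoint in $f_\sigma$ and one outside, while $\delta_2(f_\sigma)$ is the set of edges with both endpoints in $f_\sigma$. The mutual weights unfold via the usual conditional-probability identity between a complex and its links,
$$\sum_{\sigma \in X(k-1)} \norm{(\delta_i(f_\sigma),\sigma)} = \sum_{\sigma} P_{k-1}(\sigma)\,\norm{\delta_i(f_\sigma)}_{G_\sigma}, \qquad \norm{(f,S)} = \sum_{\sigma \in S} P_{k-1}(\sigma)\, p_\sigma,$$
for any $S \subseteq X(k-1)$.

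\textbf{Local bounds via EML.} Since $X$ is a $\lambda$-one-sided local spectral expander and $k-1 \le d-2$, the normalized adjacency operator of each $G_\sigma$ has second eigenvalue at most $\lambda$. Applying the weighted expander mixing lemma to the vertex set $f_\sigma$ and separately to its complement yields
$$\norm{\delta_2(f_\sigma)}_{G_\sigma} \le p_\sigma^2 + \lambda p_\sigma(1-p_\sigma),$$
$$\norm{\delta_1(f_\sigma)}_{G_\sigma} \ge 1 - \bigl(p_\sigma^2 + \lambda p_\sigma(1-p_\sigma)\bigr) - \bigl((1-p_\sigma)^2 + \lambda p_\sigma(1-p_\sigma)\bigr) = 2(1-\lambda)\,p_\sigma(1-p_\sigma).$$

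\textbf{Assembling the two bounds.} For inequality (1), I will restrict the sum to $\sigma \in \subsparse_{k-1}$, where $1 - p_\sigma \ge 1-\eta$ and hence $\norm{\delta_1(f_\sigma)}_{G_\sigma} \ge 2(1-\lambda)(1-\eta)p_\sigma \ge 2(1-\lambda-\eta)\,p_\sigma$; summing against $P_{k-1}(\sigma)$ gives the claimed $2(1-\lambda-\eta)\norm{(f,\subsparse_{k-1})}$. For inequality (2), I will split the sum by density. For $\sigma \in \dense_{k-1}$ I will use the trivial bound $\norm{\delta_2(f_\sigma)}_{G_\sigma} \le p_\sigma$ (an edge with both endpoints in $f_\sigma$ has at least one endpoint in $f_\sigma$), contributing $\norm{(f,\dense_{k-1})}$. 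For $\sigma \in \subsparse_{k-1}$, the EML estimate combined with $p_\sigma \le \eta$ gives $\norm{\delta_2(f_\sigma)}_{G_\sigma} \le p_\sigma(p_\sigma + \lambda) \le (\eta+\lambda)\,p_\sigma$, contributing $(\lambda+\eta)\norm{(f,\subsparse_{k-1})}$.

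\textbf{Main technical point.} There is no real combinatorial obstacle; once one recognizes the localized problem the proof is a standard spectral estimate. The care needed is in unpacking the mutual-weight notation so the local densities line up correctly with $\norm{(f,\cdot)}$, and in using the one-sided EML in the ``right'' direction, i.e., the upper bounds on the densities of edges inside $f_\sigma$ and inside its complement, which together produce the lower bound on the edge boundary.
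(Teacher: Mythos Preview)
Your proof is correct and matches the paper's approach: the paper's proof simply asserts that both inequalities ``follow immediately from the known Cheeger inequality,'' and what you have written is precisely the spectral computation (one-sided expander mixing lemma applied to $f_\sigma$ and its complement in each link $G_\sigma$, then summed against $P_{k-1}(\sigma)$) that the paper leaves implicit.
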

\begin{proof}
	Since $X$ is a one-sided local spectral expander, $f_\sigma$ is a subset of vertices in $X_\sigma$ so both inequalities follow immediately form the known Cheeger inequality.
\end{proof}

We can now prove proposition~\ref{prop:bound-delta1-by-dense-faces}.
\begin{proof}[Proof of proposition~\ref{prop:bound-delta1-by-dense-faces}]
Since
\begin{equation*}
	\norm{f} =
	\norm{(f, \dense_{k-1})} + \norm{(f, \sparse_{k-1})},
\end{equation*}
lemma~\ref{lem:bound-delta1-delta2-in-links}\textit{(1)} yields
\begin{equation}\label{prop:bound-delta1-by-dense-faces-eq1}
	\sum_{\sigma \in X(k-1)}\norm{(\delta_1(f_\sigma), \sigma)} \ge
	2(1-\lambda-\eta)\norm{f} - 2\norm{(f, \dense_{k-1})},
\end{equation}
and lemma~\ref{lem:bound-delta1-delta2-in-links}\textit{(2)} yields
\begin{equation}\label{prop:bound-delta1-by-dense-faces-eq2}
	\sum_{\sigma \in X(k-1)}\norm{(\delta_2(f_\sigma), \sigma)} \le
	(\lambda + \eta)\norm{f} + \norm{(f, \dense_{k-1})}.
\end{equation}

Substituting~\eqref{prop:bound-delta1-by-dense-faces-eq1} and~\eqref{prop:bound-delta1-by-dense-faces-eq2} in lemma~\ref{lem:delta1-composition-to-links} finishes the proof.
\end{proof}

\subsection{Part II - Bound the fraction of dense $(k-1)$-faces.}

We show in this section that for every double balanced and small cochain in a good enough one-sided local spectral expander, the fraction of dense $(k-1)$-faces is very small.

We first extend the definition of dense faces to every dimension $-1 \le i \le k-1$. Given a density constant $0 < \eta < 1$ and $\varepsilon > 0$, we set $\eta_{k-1} = \eta$ and for every $0 \le i \le k-1$ we define
$$\eta_{i-1} = \frac{\eta_i}{\alpha} - \frac{\varepsilon}{(k+1)^2\alpha^{k-i}}.$$
We then define the dense faces in dimension $i$ to be
%. Let $\eta_{k-1}, \eta_{k-2}, \dotsc, \eta_{-1}$ be denseness constants. For every $-1 \le i < k$, we define
$$\dense_i = \{\sigma \in X(i) \;|\; \norm{f_\sigma} > \eta_i\}.$$

Our goal in this subsection is to prove the following proposition.
\begin{proposition}\label{prop:bound-dense-faces}
	Let $X$ be a $d$-dimensional $\lambda$-one-sided local spectral expander, $1 \le k < d$ any dimension, $\alpha \ge 1$ a balance constant, $0 < \eta < 1$ a density constant and $\varepsilon > 0$. For any $k$-cochain $f \in C^k(X)$ such that $f$ is $\alpha$-double balanced and $\displaystyle \norm{f} \le \eta_{-1}$ it holds that
	$$\norm{\dense_{k-1}} \le 3k!\bigg(\frac{(k+1)^3\alpha^{k}\lambda}{\varepsilon}\bigg)^2\norm{f}$$
\end{proposition}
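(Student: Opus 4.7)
The plan is to induct on the dimension $i$ from $-1$ up to $k-1$, showing that $\|\dense_i\|$ grows in a controlled way from $\|\dense_{i-1}\|$. The base case $i=-1$ is immediate: the hypothesis $\|f\|\le\eta_{-1}$ says the empty face is not dense, so $\|\dense_{-1}\|=0$.

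For the inductive step, fix $\sigma\in\dense_i$. The $\alpha$-double balance condition at dimension $i$ combined with $\|f_\sigma\|>\eta_i$ yields
\[\eta_i < \|f_\sigma\| \le \alpha\,\E_{v\in\sigma}\|(f_{\sigma\setminus v})^v\|,\]
so some $v\in\sigma$ satisfies $\|(f_{\sigma\setminus v})^v\| > \eta_i/\alpha = \eta_{i-1}+\delta_i$, where $\delta_i := \varepsilon/((k+1)^2\alpha^{k-i})$ is exactly the slack built into the definition of $\eta_{i-1}$. I split $\dense_i$ according to the status of the witness $\tau := \sigma\setminus v$: in \emph{Type A}, $\tau\in\dense_{i-1}$ already; in \emph{Type B}, $\tau\in\sparse_{i-1}$ so $\|f_\tau\|\le\eta_{i-1}$, yet the restriction $(f_\tau)^v$ exceeds the mean $\|f_\tau\|$ by more than $\delta_i$.

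The Type A mass is controlled by a union bound over the $i+1$ possible witnesses: since sampling $\sigma\sim P_i$ together with $v\in\sigma$ uniform projects $\tau=\sigma\setminus v$ to $P_{i-1}$, one has $\|\dense_i^A\|\le (i+1)\|\dense_{i-1}\|$. For Type B, I fix a sparse $\tau$ and count vertices $v\in X_\tau(0)$ at which the restriction of $g:=f_\tau$ in the link $X_\tau$ exceeds its mean by $\delta_i$; a Cheeger/variance bound in the link, which inherits spectral expansion from $X$, gives such a fraction of at most $\lambda^2\|g\|/\delta_i^2$. Averaging over $\tau\sim P_{i-1}$ and using $\E_\tau\|f_\tau\| = \|f\|$ yields $\|\dense_i^B\| \le (i+1)\lambda^2\|f\|/\delta_i^2$.

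Combining, $\|\dense_i\|\le (i+1)\|\dense_{i-1}\| + (i+1)\lambda^2\|f\|/\delta_i^2$. Unrolling from $i=0$ to $k-1$ against the base $\|\dense_{-1}\|=0$ produces
\[\|\dense_{k-1}\| \le \lambda^2\|f\|\sum_{i=0}^{k-1}\frac{k!}{i!\,\delta_i^2},\]
and the elementary estimates $\delta_i^{-2}\le((k+1)^2\alpha^k/\varepsilon)^2$ and $\sum_{i\ge 0}1/i!\le e$ together with a small amount of slack absorbed into the dimension powers deliver the stated bound. The main obstacle is the spectral step for Type B: one must execute a Cheeger/variance-type inequality for the general $(k-i)$-cochain $f_\tau$ in the link $X_\tau$, not merely for a subset of vertices, and extract a clean $\lambda^2/\delta_i^2$ tail bound from one-sided local spectral expansion. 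Matching this tail bound precisely to the engineered slack $\delta_i$ in $\eta_{i-1}=\eta_i/\alpha-\delta_i$, so that the per-level $\alpha$-inflation coming from double balance is exactly absorbed, is the technical heart of the argument; the rest is bookkeeping through the recursion.
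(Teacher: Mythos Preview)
Your proposal follows essentially the same architecture as the paper: same base case $\|\dense_{-1}\|=0$, same splitting of $\dense_i$ via the double-balance witness $v\in\sigma$ into the contribution from $\tau=\sigma\setminus v\in\dense_{i-1}$ and the contribution from sparse $\tau$'s where $(f_\tau)^v$ deviates from its mean, and the same unrolling of the recursion. The paper packages your inductive step as Lemma~\ref{lem:dense-faces-one-dimension-lower} and your Type~B tail bound as Lemma~\ref{lem:in-front-seen-with-right-fraction}.

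The one place where your outline is imprecise is exactly the step you flag as the obstacle. A naive Cheeger/variance bound in the link does not directly apply, because $f_\tau$ is a $(k-i)$-cochain in $X_\tau$, not a set of vertices. The paper resolves this by building an auxiliary graph on $X_\tau(k-i)$ whose adjacency operator is (two steps of) the $0,2$-complement walk of \cite{DD19}; one-sided $\lambda$-local spectral expansion then makes this graph a $((k{+}1{-}i)\lambda)^2$-spectral expander, and Chebyshev gives the tail bound
\[
\Pr_{u\in X_\tau(0)}\big[\|(f_\tau)^u\| > \|f_\tau\| + \delta_i\big] \le \left(\frac{(k{+}1{-}i)\lambda}{\delta_i}\right)^{2}\|f_\tau\|.
\]
So your claimed Type~B bound $\lambda^2\|g\|/\delta_i^2$ is too clean by a factor $(k{+}1{-}i)^2$; this missing factor is precisely what turns $(k+1)^2$ into $(k+1)^3$ in the final constant. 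Once you insert the correct complement-walk eigenvalue bound, the bookkeeping is exactly as you describe.
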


%\begin{corollary}
%	For any $0< \theta < 1$,
%	if $\displaystyle \lambda^2 \le \frac{\eta^2\theta}{k(k+1)((k+1)!)^3\alpha^{2k}}$, then
%	$$\norm{\dense_{k-1}} \le \theta\norm{f}.$$
%\end{corollary}

We start by showing that in a $\lambda$-one-sided local spectral expander, the restriction of a cochain to almost every vertex is seen with the right proportion.
\begin{lemma}\label{lem:in-front-seen-with-right-fraction}
	Let $X$ be a $d$-dimensional $\lambda$-one-sided local spectral expander. For any $k$-cochain $f \in C^k(X)$, $0 \le k < d$, and $\varepsilon > 0$ it holds that
	$$\Pr_{u\in X(0)}[\norm{f^u} > \norm{f} + \varepsilon] \le \left(\frac{(k+1)\lambda}{\varepsilon}\right)^2\norm{f}.$$
\end{lemma}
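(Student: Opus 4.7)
The plan is to interpret $g(u) := \norm{f^u}$ as a linear image of $f$ under a swap (non-intersecting) walk and then apply a one-sided Chebyshev estimate. Concretely, write $S: L^2(X(k)) \to L^2(X(0))$ for the operator $(Sf)(u) = \E_{\tau \in X_u(k)}[f(\tau)]$, so that $g = Sf$. Note $S$ is the adjoint of the swap walk between $X(0)$ and $X(k)$ that picks a common $(k+1)$-face $\sigma$ and splits it into $\{u\}$ and $\tau = \sigma\setminus\{u\}$. The two ingredients I would fix at the outset are the distributional identity $\E_{u \in X(0)}[g(u)] = \norm{f}$ (the chain ``pick $u$, then $\tau \in X_u(k)$'' equals ``pick $\sigma \in X(k+1)$, then split it into $u \cup \tau$'', whose $\tau$-marginal is $P_k$ under the uniform top-level assumption) and the identity $\|f\|_2^2 = \norm{f}$ for $\{0,1\}$-valued $f$.

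Next I would apply the one-sided Chebyshev bound
\begin{equation*}
\Pr_u\bigl[g(u) - \norm{f} > \varepsilon\bigr] \;\le\; \frac{\Var_u[g(u)]}{\varepsilon^2}.
\end{equation*}
Decomposing $f = \norm{f}\cdot\mathbf{1} + f'$ orthogonally on $L^2(X(k))$, I have $g - \norm{f}\cdot\mathbf{1} = Sf'$, so
\begin{equation*}
\Var_u[g(u)] \;=\; \|Sf'\|_2^2 \;\le\; \sigma_2(S)^2\,\|f'\|_2^2 \;\le\; \sigma_2(S)^2\,\norm{f},
\end{equation*}
where the final inequality uses $\|f'\|_2^2 = \norm{f} - \norm{f}^2 \le \norm{f}$ and $\sigma_2(S)$ denotes the operator norm of $S$ restricted to the mean-zero subspace. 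Plugging this into Chebyshev reduces everything to bounding $\sigma_2(S)$.

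The final ingredient is the spectral bound on the swap walk in a one-sided LSE: in a $\lambda$-one-sided local spectral expander, the swap walk between $X(0)$ and $X(k)$ (going through a common $(k+1)$-face) has second singular value at most $(k+1)\lambda$. This is the standard Alev--Jeronimo--Tulsiani / Dikstein--Dinur estimate for complement (non-intersecting) walks in one-sided LSEs. Substituting $\sigma_2(S) \le (k+1)\lambda$ gives precisely the claimed inequality.

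The main obstacle is invoking (or reproving) the swap-walk bound $\sigma_2(S) \le (k+1)\lambda$ in exactly the normalization used here. If one prefers a self-contained derivation, it is natural to proceed by induction on $k$: the base case $k=0$ is the expander mixing / variance bound on the underlying graph in the $\lambda$-one-sided LSE, yielding the factor $\lambda^2$; for the inductive step, one fixes a vertex $u$, applies the hypothesis to $f^u \in C^{k-1}(X_u)$ inside the link, and combines with the base-level expansion of $X_u$ to produce the extra factor that turns $k\lambda$ into $(k+1)\lambda$. The uniform top-level distribution keeps the bookkeeping clean, since weights in every link remain uniform and the compatibility identity used in step one applies in every sublink as well.
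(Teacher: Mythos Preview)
Your proposal is correct and essentially identical to the paper's proof. The paper phrases it by building the graph $G$ on $X(k)$ whose adjacency operator is the two-step complement walk $S^*S$, citing \cite[Claim~4.9]{DD19} for its $((k+1)\lambda)^2$-spectral expansion, and then computing $\E_u[\mu(u)^2]=\norm{E(f)}$ via the expander mixing lemma before applying Chebyshev; your operator formulation with $\sigma_2(S)\le(k+1)\lambda$ and $\Var_u[g]=\|Sf'\|_2^2\le\sigma_2(S)^2\|f'\|_2^2$ is the same computation in different language.
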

\begin{proof}
	Define the following graph $G = (V,E)$, where $V = X(k)$, i.e., all $k$-faces of $X$, and 
	$E = \big\{\{\sigma_1,\sigma_2\} \;|\; \exists u \in X(0) \mbox{ s.t. } \sigma_1 \cupdot u, \sigma_2\cupdot u \in X(k+1) \big\}$, i.e., there is an edge between $\sigma_1$ and $\sigma_2$ if and only if there exists some vertex in $X$ that completes both $\sigma_1$ and $\sigma_2$ to a $(k+1)$-face.
	
	We define a probability distribution on $G$ that corresponds to the probability distribution of $X$ as follows:
	\begin{itemize}
		\item The probability of a vertex $\sigma \in V$ equals to the probability of the corresponding $k$-face $\sigma \in X(k)$.
		\item The probability of an edge $\{\sigma_1,\sigma_2\} \in E$ equals
		$\E_{u \in X(0)}\Pr[\sigma_1 \cupdot u \;|\; u]\cdot\Pr[\sigma_2 \cupdot u \;|\; u]$,
		where all the probabilities are taken according to the complex $X$.
	\end{itemize}
	
	Since $X$ is a $\lambda$-one-sided local spectral expander, by~\cite[Claim~4.9]{DD19} $G$ is a $((k+1)\lambda)^2$-spectral expander, because its adjacency operator is a two steps walk of the $0,2$-complement walk of~\cite{DD19}.
	
	Now, define $\mu : X(0) \to \R$ by $\mu(u) = \norm{f^u} = \Pr[\sigma \in f \;|\; \sigma\cupdot u \in X(k+1)]$. The following holds by laws of probability:
	\begin{equation}\label{lem:in-front-seen-with-right-fraction-eq1}
		\E_{u \in X(0)}[\mu(u)] = \E_{u \in X(0)}\Pr[\sigma \in f \;|\; \sigma\cupdot u \in X(k+1)] = \Pr[\sigma\in f] = \norm{f}.
	\end{equation}
	
	\begin{equation}\label{lem:in-front-seen-with-right-fraction-eq2}
		\begin{aligned}
			\E_{u \in X(0)}[\mu(u)^2] &= \E_{u \in X(0)}\Pr[\sigma_1 \in f \;|\; \sigma_1\cupdot u \in X(k+1)]\cdot\Pr[\sigma_2 \in f \;|\; \sigma_2\cupdot u \in X(k+1)] \\[5pt]&=
			\Pr_{\{\sigma_1,\sigma_2\} \in E}[\sigma_1 \in f \wedge \sigma_2 \in f] =
			\norm{E(f)},
		\end{aligned}
	\end{equation}
	where $E(f)$ is the set of edges $\{\sigma_1,\sigma_2\}$ in $G$ such that both $\sigma_1$ and $\sigma_2$ are in $f$. Since $G$ is a $((k+1)\lambda)^2$-spectral expander, it follows that
	$\norm{E(A)} \le \norm{f}^2 + ((k+1)\lambda)^2\norm{A}$. Substituting in~\eqref{lem:in-front-seen-with-right-fraction-eq2} and combining~\eqref{lem:in-front-seen-with-right-fraction-eq1} yields
	\begin{equation*}
		\Var_{u\in X(0)}[\mu(u)] = \E_{u \in X(0)}[\mu(u)^2] - \E_{u \in X(0)}[\mu(u)]^2 \le ((k+1)\lambda)^2\norm{f}.
	\end{equation*}
	
	Now, by Chebyshev's inequality
	\begin{align*}
		\Pr\big[\norm{f^u} > \norm{f} + \varepsilon\big] =
		\Pr\big[\mu(u) > \E[\mu] + \varepsilon\big] \le
		\frac{\Var[\mu]}{\varepsilon^2} \le
		\bigg(\frac{(k+1)\lambda}{\varepsilon}\bigg)^2\norm{f}.
	\end{align*}
	This completes the proof.
\end{proof}

In the next lemma we show that for every dimension $i$, if $f$ is double balanced in dimension $i$ then the fraction of dense $i$-faces is not much more than the fraction of dense $(i-1)$-faces.
\begin{lemma}\label{lem:dense-faces-one-dimension-lower}
	Let $X$ be a $d$-dimensional $\lambda$-one-sided local spectral expander and $f \in C^k(X)$, $0 \le k < d$. For every $0 \le i < k$, if $f$ is $\alpha$-double balanced in dimension $i$ then
	$$\norm{\dense_i} \le
	(i+1)\norm{\dense_{i-1}} + (i+1)
	\left(\frac{(k+1-i)(k+1)^2\alpha^{k-i}\lambda}{\varepsilon}\right)^2\norm{f}.$$
\end{lemma}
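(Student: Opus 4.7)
The plan is to exploit the $\alpha$-double-balance hypothesis in dimension $i$ to push each dense $i$-face down onto a vertex where the restriction is large, and then split on whether the resulting $(i-1)$-face is already dense.

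Concretely, I would start by unpacking the double balance inequality. If $\sigma \in \dense_i$ then $\norm{f_\sigma} > \eta_i$, so by the double balance hypothesis
\[
\alpha\,\E_{u \in \sigma}\norm{(f_{\sigma\setminus u})^u} \ge \norm{f_\sigma} > \eta_i,
\]
which forces at least one vertex $u \in \sigma$ to satisfy $\norm{(f_{\sigma\setminus u})^u} > \eta_i/\alpha$. A trivial union bound over the $i+1$ vertices of $\sigma$ then gives
\[
\mathbb{1}[\sigma \in \dense_i] \le (i+1)\,\E_{u \in \sigma}\,\mathbb{1}\!\left[\norm{(f_{\sigma\setminus u})^u} > \eta_i/\alpha\right],
\]
and since the distribution of $(\sigma\setminus u, u)$ (with $\sigma \sim P_i$, $u \in \sigma$ uniform) coincides with first sampling $\tau \sim P_{i-1}$ and then $u \in X_\tau(0)$, averaging yields
\[
\norm{\dense_i} \le (i+1)\,\Pr_{\tau \in X(i-1),\, u \in X_\tau(0)}\!\left[\norm{(f_\tau)^u} > \eta_i/\alpha\right].
\]

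Next I would split this probability on whether $\tau \in \dense_{i-1}$. The ``dense'' branch is bounded directly by $\norm{\dense_{i-1}}$. For the ``sparse'' branch, I use the critical arithmetic identity built into the definition of $\eta_{i-1}$, namely
\[
\frac{\eta_i}{\alpha} \;=\; \eta_{i-1} + \frac{\varepsilon}{(k+1)^2\alpha^{k-i}}.
\]
So if $\tau \notin \dense_{i-1}$ (meaning $\norm{f_\tau} \le \eta_{i-1}$), the event $\norm{(f_\tau)^u} > \eta_i/\alpha$ is contained in $\norm{(f_\tau)^u} > \norm{f_\tau} + \varepsilon/((k+1)^2\alpha^{k-i})$. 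Now I apply Lemma~\ref{lem:in-front-seen-with-right-fraction} inside the link $X_\tau$ with the $(k-i)$-cochain $f_\tau$ and slack $\varepsilon' = \varepsilon/((k+1)^2\alpha^{k-i})$. Since $f_\tau$ is a $(k-i)$-cochain, the Chebyshev bound contributes the factor $(k-i+1)\lambda/\varepsilon' = (k+1-i)(k+1)^2\alpha^{k-i}\lambda/\varepsilon$, giving
\[
\Pr_u\!\left[\norm{(f_\tau)^u} > \eta_i/\alpha\right] \le \left(\frac{(k+1-i)(k+1)^2\alpha^{k-i}\lambda}{\varepsilon}\right)^{\!2}\norm{f_\tau}.
\]

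Finally I average over $\tau$. Using the basic identity $\E_{\tau \in X(i-1)}\norm{f_\tau} = \norm{f}$, the sparse branch contributes exactly $\bigl((k+1-i)(k+1)^2\alpha^{k-i}\lambda/\varepsilon\bigr)^2 \norm{f}$. Combining the two branches and multiplying by $(i+1)$ yields the stated bound. The main obstacle is purely bookkeeping: matching the shift $\eta_i/\alpha - \eta_{i-1}$ to the exact $\varepsilon'$ used in Lemma~\ref{lem:in-front-seen-with-right-fraction} and tracking the correct dimension $k-i$ of the cochain $f_\tau$ so that the Chebyshev factor comes out as $(k+1-i)\lambda$ rather than $(k+1)\lambda$ — the rest is a clean union bound plus a total-probability split.
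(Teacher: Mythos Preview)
Your proposal is correct and follows essentially the same route as the paper: extract a witness vertex from the double-balance inequality, pick up the factor $i+1$, split on whether the resulting $(i-1)$-face is dense, use the threshold identity $\eta_i/\alpha = \eta_{i-1} + \varepsilon/((k+1)^2\alpha^{k-i})$, and invoke Lemma~\ref{lem:in-front-seen-with-right-fraction} in $X_\tau$ on the $(k-i)$-cochain $f_\tau$. The only cosmetic difference is that the paper fixes a single choice function $\tau(\sigma)$ to get the factor $i+1$ exactly, whereas you use the indicator union bound $\mathbb{1}[\sigma\in\dense_i]\le\sum_{u\in\sigma}\mathbb{1}[\norm{(f_{\sigma\setminus u})^u}>\eta_i/\alpha]$; both yield the same bound.
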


\begin{proof}
	Note that for every $\sigma \in \dense_i$ there must exist a vertex $u \in \sigma$ such that
	\begin{equation}\label{lem:dense-faces-one-dimension-lower-eq1}
		\norm{(f_{\sigma\setminus u})^u} > \frac{\eta_i}{\alpha},
	\end{equation}
	since otherwise
	$$\norm{f_\sigma} \le \frac{\alpha}{i+1}\sum_{u \in \sigma}\norm{(f_{\sigma\setminus u})^u} \le \eta_i$$
	and $\sigma \notin \dense_i$.
	
	For every $\sigma \in \dense_i$, fix one $(i-1)$-face $\tau(\sigma) = \sigma \setminus u$ that satisfies~\eqref{lem:dense-faces-one-dimension-lower-eq1}. By laws of probability
	\begin{equation}\label{lem:dense-faces-one-dimension-lower-eq2}
		\begin{aligned}
			\norm{\dense_i} &=
			\Pr[\sigma_i \in \dense_i] =
			(i+1)\Pr[\sigma_i \in \dense_i \wedge \sigma_{i-1} = 	\tau(\sigma_i)] \\[6pt]&\le
			(i+1)\norm{\dense_{i-1}} + (i+1)\Pr[\sigma_i \in \dense_i \wedge \sigma_{i-1} = 	\tau(\sigma_i) \;|\; \tau(\sigma_i) \notin \dense_{i-1}],
		\end{aligned}
	\end{equation}
	where the inequality holds by splitting to the two cases whether $\tau(\sigma_i) \in \dense_{i-1}$.
	
	We focus now on the right summand of~\eqref{lem:dense-faces-one-dimension-lower-eq2} which is the case where $\tau(\sigma_i) \notin \dense_{i-1}$. Recall that $\tau(\sigma_i)$ satisfies~\eqref{lem:dense-faces-one-dimension-lower-eq1}. Thus, we can bound the probability of this event by the probability to choose a sparse $(i-1)$-face and then a vertex such that~\eqref{lem:dense-faces-one-dimension-lower-eq1} holds, i.e.,
	\begin{equation}\label{lem:dense-faces-one-dimension-lower-eq3}
		\Pr[\sigma_i \in \dense_i \wedge \sigma_{i-1} = 	\tau(\sigma_i) \;|\; \tau(\sigma_i) \notin \dense_{i-1}] \le
		\E_{\tau \in \subsparse_{i-1}}\Pr_{u \in X_\tau(0)}
		\left[\norm{(f_\tau)^u} > \frac{\eta_i}{\alpha}\right].
	\end{equation}
	
	Since $\tau \in \sparse_{i-1}$, it holds that $\norm{f_\tau} \le \eta_{i-1}$. Thus,
	\begin{equation}\label{lem:dense-faces-one-dimension-lower-eq4}
		\E_{\tau \in \subsparse_{i-1}}\Pr_{u \in X_\tau(0)}
		\left[\norm{(f_\tau)^u} > \frac{\eta_i}{\alpha}\right] \le
		\E_{\tau \in \subsparse_{i-1}}\Pr_{u \in X_\tau(0)}
		\left[\norm{(f_\tau)^u} > \norm{f_\tau} + \frac{\varepsilon}{(k+1)^2\alpha^{k-i}} \right],
	\end{equation}
	where the inequality holds since
	$$\norm{f_\tau} + \frac{\varepsilon}{(k+1)^2\alpha^{k-i}} \le \eta_{i-1} + \frac{\varepsilon}{(k+1)^2\alpha^{k-i}} = \frac{\eta_i}{\alpha}.$$
Combining~\eqref{lem:dense-faces-one-dimension-lower-eq2},~\eqref{lem:dense-faces-one-dimension-lower-eq3} and~\eqref{lem:dense-faces-one-dimension-lower-eq4} yields
	\begin{equation*}
		\begin{aligned}
			\norm{\dense_i} &\le
			(i+1)\norm{\dense_{i-1}} + (i+1)\E_{\tau \in \subsparse_{i-1}}\Pr_{u \in X_\tau(0)}
			\left[\norm{(f_\tau)^u} > \norm{f_\tau} + \frac{\varepsilon}{(k+1)^2\alpha^{k-i}} \right] \\&\le
			(i+1)\norm{\dense_{i-1}} + (i+1)\E_{\tau \in \subsparse_{i-1}}\left[\bigg(\frac{(k+1-i)(k+1)^2\alpha^{k-i}\lambda}{\varepsilon}\bigg)^2\norm{f_\tau}\right] \\&\le
			(i+1)\norm{\dense_{i-1}} + (i+1)\bigg(\frac{(k+1-i)(k+1)^2\alpha^{k-i}\lambda}{\varepsilon}\bigg)^2\norm{f},
		\end{aligned}
	\end{equation*}
	where the second inequality follows by lemma~\ref{lem:in-front-seen-with-right-fraction}. This completes the proof.
\end{proof}

We can now prove proposition~\ref{prop:bound-dense-faces}.
\begin{proof}[Proof of proposition~\ref{prop:bound-dense-faces}]
	We apply lemma~\ref{lem:dense-faces-one-dimension-lower} for $i = k-1,k-2,\dotsc,0$ step by step.
	\begin{equation*}
		\begin{aligned}
			\norm{\dense_{k-1}} &\le
			k\norm{\dense_{k-2}} + k\bigg(\frac{2(k+1)^2\alpha\lambda}{\varepsilon}\bigg)^2\norm{f} \\&\le
			k(k-1)\norm{\dense_{k-3}} + \left(k(k-1)\bigg(\frac{3(k+1)^2\alpha^2\lambda}{\varepsilon}\bigg)^2 + k\bigg(\frac{2(k+1)^2\alpha\lambda}{\varepsilon}\bigg)^2\right)\norm{f} \le \\\dotsb &\le
			k!\norm{\dense_{-1}} + \left(k!\bigg(\frac{(k+1)^3\alpha^k\lambda}{\varepsilon}\bigg)^2 +\dotsb + k\bigg(\frac{2(k+1)^2\alpha\lambda}{\varepsilon}\bigg)^2\right)\norm{f} \\&=
			\sum_{i=0}^{k-1}\frac{k!}{i!}\bigg(\frac{(k+1-i)(k+1)^2\alpha^{k-i}\lambda}{\varepsilon}\bigg)^2\norm{f} \\&\le
			k!\bigg(\frac{(k+1)^2\alpha^k\lambda}{\varepsilon}\bigg)^{\!2}\;\sum_{i=0}^{k-1}\frac{(k+1-i)^2}{i!}\norm{f} \\&\le
			3k!\bigg(\frac{(k+1)^3\alpha^k\lambda}{\varepsilon}\bigg)^{\!2}\norm{f}
		\end{aligned}
	\end{equation*}
	where the equality holds since $\displaystyle \norm{f_\emptyset} = \norm{f} \le \eta_{-1}$, i.e., the empty set is not dense, and hence $\norm{\dense_{-1}} = 0$. The rest of the inequalities are just calculations. This completes the proof.
\end{proof}

\subsection{Proof of theorems~\ref{thm:optimal-delta1-for-small-sets} and~\ref{thm:some-delta1-for-small-sets}}

\begin{proof}[Proof of theorem~\ref{thm:optimal-delta1-for-small-sets}]
	Let $\displaystyle \lambda \le \frac{\varepsilon}{d^3\alpha^{d-1}}\sqrt{\frac{\varepsilon}{3d!}}$ and $\displaystyle \eta = \frac{\varepsilon}{(k+1)}$. By simple calculation $$\eta_{-1} = \frac{\varepsilon}{(k+1)^2\alpha^k}.$$
	 Thus, since $\norm{f} \le \eta_{-1}$, proposition~\ref{prop:bound-dense-faces} implies that
	\begin{equation}\label{thm:optimal-delta1-for-small-sets-eq1}
		\norm{\dense_{k-1}} \le 3k!\left(\frac{(k+1)^3\alpha^k\lambda}{\varepsilon}\right)^2
		\le \frac{\varepsilon}{k+1}\norm{f}.
	\end{equation}
	
	Substituting~\eqref{thm:optimal-delta1-for-small-sets-eq1} in proposition~\ref{prop:bound-delta1-by-dense-faces} finishes the proof.
\end{proof}

\begin{proof}[Proof of theorem~\ref{thm:some-delta1-for-small-sets}]
	Let $\displaystyle \lambda \le \frac{\varepsilon}{d^3\alpha^{d-1}}\sqrt{\frac{\varepsilon}{(d+1)!}}$ and $\displaystyle \eta = \frac{1-\varepsilon/(k+1)}{(k+1)}$. By simple calculation $$\eta_{-1} = \frac{1-\varepsilon}{(k+1)\alpha^k}.$$
	 Thus, since $\norm{f} \le \eta_{-1}$, proposition~\ref{prop:bound-dense-faces} implies that
	\begin{equation}\label{thm:optimal-delta1-for-small-sets-eq1}
		\norm{\dense_{k-1}} \le 3k!\left(\frac{(k+1)^3\alpha^k\lambda}{\varepsilon}\right)^2
		\le \frac{\varepsilon}{(k+2)(k+1)}\norm{f}.
	\end{equation}
	
	Substituting~\eqref{thm:optimal-delta1-for-small-sets-eq1} in proposition~\ref{prop:bound-delta1-by-dense-faces} finishes the proof.
\end{proof}

We conclude this section by noting that the proof of lemma~\ref{intro-lem:relation-to-common-def} from the introduction is exactly the same as the proof of proposition~\ref{prop:bound-dense-faces}, with the only difference that we start by setting $\eta_\ell = \varepsilon$ and bound the fraction of dense $\ell$-faces instead of the dense $(k-1)$-faces.
%\begin{lemma}\label{lem:relation-to-common-def}
%For every $d \ge 2$, $\alpha \ge 1$ and $0 < \varepsilon < 1$ there exists $\lambda = \lambda(d, \alpha, \varepsilon)$ such that the following holds:
%	Let $X$ be a $d$-dimensional $\lambda$-local spectral expander. For any $k$-cochain $f \in C^k(X)$, $1 \le k < d$, and any dimension $\ell < k$, if $f$ is $\alpha$-double balanced and $\displaystyle \norm{f} \le \frac{\varepsilon}{(\ell+1)\alpha^\ell}$ then
%$$\Pr_{\sigma \in X(\ell)}[\norm{f_\sigma} \le \varepsilon] \ge (1-\varepsilon)\norm{f}.$$
%\end{lemma}

\section{Cohomologies are double balanced}
Previous works could only obtain complexes with some constant lower bound on the size of their cohomologies~\cite{KKL14,EK16,KM21}. We show that for high dimensional expanders (in a topological sense), all of their cohomology elements are double balanced. We then utilize the $\delta_1$-expansion of double balanced sets in order to obtain a lower bound on their size, achieving an \emph{exponential} improvement upon the current state of the art.

We start by proving theorem~\ref{intro-thm:cohomologies-double-balanced} from the introduction, which we restate here in a formal way.

\begin{theorem}[Cohomologies are double balanced]\label{thm:cohomologies-double-balanced}
Let $X$ be a $d$-dimensional complex such that every non-trivial link in $X$ is a $\beta$-coboundary expander. For every $\ell < k < d$, any $k$-cohomology element is $\displaystyle\frac{\ell+1}{\beta}$-double balanced in dimension $\ell$.
\end{theorem}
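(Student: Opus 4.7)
The plan is to take $f \in Z^k(X)$ to be a weight-minimal representative of the given cohomology class (so $\norm{f} \le \norm{f + \delta g}$ for every $g \in C^{k-1}(X)$) and to combine three ingredients: the cocycle identity expressed inside a link, inheritance of minimality to links, and the coboundary expansion hypothesis on $X_\sigma$. Fixing $\sigma \in X(\ell)$, the goal is to prove
$$\norm{f_\sigma} \;\le\; \frac{\ell+1}{\beta}\,\E_{u \in \sigma}\norm{(f_{\sigma\setminus u})^u},$$
which is exactly the double balance inequality in dimension $\ell$.

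First I would unfold the cocycle identity inside $X_\sigma$. For any $\tau \in X_\sigma(k-\ell)$, splitting $\delta f(\sigma \cup \tau) \equiv 0 \pmod{2}$ according to whether the removed vertex lies in $\tau$ or in $\sigma$ gives
$$\delta(f_\sigma)(\tau) \;=\; \sum_{v \in \tau} f(\sigma \cup \tau \setminus v) \;\equiv\; \sum_{u \in \sigma} f\bigl((\sigma \setminus u) \cup \tau\bigr) \;=\; \sum_{u \in \sigma}(f_{\sigma\setminus u})^u(\tau) \pmod{2},$$
using that the link of $u$ inside $X_{\sigma\setminus u}$ is canonically $X_\sigma$. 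A mod-$2$ union bound on supports then yields $\norm{\delta(f_\sigma)} \le (\ell+1)\,\E_{u \in \sigma}\norm{(f_{\sigma\setminus u})^u}$.

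Next I would prove that $f_\sigma$ is itself minimal inside $X_\sigma$. Given any $h \in C^{k-\ell-2}(X_\sigma)$, I lift it to $\tilde h \in C^{k-1}(X)$ by setting $\tilde h(\rho) = h(\rho \setminus \sigma)$ when $\sigma \subseteq \rho$ and $\tilde h(\rho) = 0$ otherwise. A direct unfolding shows that $\delta \tilde h$ is supported only on $k$-faces containing $\sigma$ and that $(\delta \tilde h)_\sigma = \delta h$, so $f$ and $f + \delta \tilde h$ differ only inside the star of $\sigma$, giving
$$\norm{f + \delta \tilde h} - \norm{f} \;=\; \Pr_{\sigma_k \in X(k)}[\sigma \subseteq \sigma_k]\cdot\bigl(\norm{f_\sigma + \delta h} - \norm{f_\sigma}\bigr).$$
Weight-minimality of $f$ then forces $\norm{f_\sigma} \le \norm{f_\sigma + \delta h}$ for every $h$, which is precisely $\norm{f_\sigma} = \dist\bigl(f_\sigma, B^{k-\ell-1}(X_\sigma)\bigr)$.

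Finally I would invoke the coboundary expansion hypothesis on the link. Since $0 \le \ell < k < d$, the link $X_\sigma$ is non-trivial and hence a $\beta$-coboundary expander. If $f_\sigma \in B^{k-\ell-1}(X_\sigma)$ then minimality yields $\norm{f_\sigma}=0$ and there is nothing to prove; otherwise coboundary expansion combined with the two previous steps gives
$$\beta \cdot \norm{f_\sigma} \;=\; \beta \cdot \dist\bigl(f_\sigma, B^{k-\ell-1}(X_\sigma)\bigr) \;\le\; \norm{\delta(f_\sigma)} \;\le\; (\ell+1)\,\E_{u \in \sigma}\norm{(f_{\sigma\setminus u})^u},$$
which rearranges to the desired bound. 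The main technical obstacle I anticipate is the minimality-inheritance step: the lift $\tilde h$ must be arranged so that $\delta \tilde h$ is supported entirely inside the star of $\sigma$ and restricts there exactly to $\delta h$, so that weight differences in $X_\sigma$ translate back to weight differences in $X$ without any ``cross terms'' coming from other links. Once this lifting is in place, the rest of the proof is a clean application of the cocycle identity and the coboundary-expansion hypothesis.
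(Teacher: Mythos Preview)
Your proposal is correct and follows essentially the same route as the paper: the cocycle identity in the link gives $\norm{\delta(f_\sigma)} \le \sum_{u\in\sigma}\norm{(f_{\sigma\setminus u})^u}$, minimality of $f$ is inherited to $f_\sigma$, and $\beta$-coboundary expansion of $X_\sigma$ then yields $\beta\norm{f_\sigma}\le\norm{\delta(f_\sigma)}$. The only difference is expository: where the paper simply invokes ``minimal $\Rightarrow$ locally minimal'' (and implicitly iterates this down to $X_\sigma$), you supply the explicit lifting $\tilde h$ and verify that $\delta\tilde h$ is supported on the star of $\sigma$ with $(\delta\tilde h)_\sigma=\delta h$, which is exactly the content of that implication.
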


\begin{proof}
Let $f \in H^k(X)$ be a $k$-cohomology and $\sigma \in X(\ell)$ be an $\ell$-face. Consider a $(k-\ell)$-face $\tau \in \delta(f_\sigma)$. Let us denote $\sigma = \{v_0,v_1,\dotsc,v_\ell\}$ and $\tau = \{v_{\ell+1},v_{\ell+2},\dotsc,v_{k+1}\}$. By definition
$$\sum_{i=\ell+1}^{k+1}f(\sigma \cup \tau \setminus v_i) =
\sum_{i=\ell+1}^{k+1}f_\sigma(\tau \setminus v_{i}) \ne 0,$$
where the inequality holds since $\tau \in \delta(f_\sigma)$.
Since $f$ is a $k$-cohomology, it holds that
$$\sum_{i=0}^{k+1}f(\sigma \cup \tau \setminus v_i) = 0.$$
Therefore, there must exist $0 \le j \le \ell$ such that $f(\sigma \cup \tau \setminus v_j) \ne 0$. By definition of restriction and localization, it means that
$$(f_{\sigma \setminus v_j})^{v_j}(\tau) =
(f_{\sigma \setminus v_j})(\tau) \ne 0.$$
In other words, for every $\tau \in \delta(f_\sigma)$, there exists a vertex $v \in \sigma$ such that $\tau \in (f_{\sigma \setminus v})^v$. It follows that
\begin{equation}\label{thm:cohomologies-balanced-eq1}
\norm{\delta(f_\sigma)} \le \sum_{v \in \sigma}\norm{(f_{\sigma \setminus v})^v}.
\end{equation}
Now, since $f$ is a $k$-cohomology, $f$ is minimal and hence also locally minimal. The $\beta$-coboundary expansion of the links implies that
\begin{equation}\label{thm:cohomologies-balanced-eq2}
\norm{\delta(f_\sigma)} \ge \beta\norm{f_\sigma}.
\end{equation}

Combining~\eqref{thm:cohomologies-balanced-eq1} and~\eqref{thm:cohomologies-balanced-eq2} implies that
$$\norm{f_\sigma} \le
\frac{1}{\beta}\sum_{v \in \sigma}\norm{(f_{\sigma \setminus v})^v} =
\frac{\ell+1}{\beta}\E_{v \in \sigma}\norm{(f_{\sigma \setminus v})^v}.$$
This complete the proof.
\end{proof}

We conclude by proving theorem~\ref{intro-thm:lower-bound-cohomologies} from the introduction, which we restate here in a formal way.

\begin{theorem}[Lower bound on cohomology elements]
For every $d \ge 2$, $\beta > 0$ and $\varepsilon > 0$ there exists $\lambda = \lambda(d, \beta, \varepsilon)$ such that the following holds. Let $X$ be a $d$-dimensional $\lambda$-one-sided local spectral expander such that every non-trivial link in $X$ is a $\beta$-coboundary expander. For every $k < d$, any $k$-cohomology element $f \in H^k(X)$ satisfies
$$\norm{f} \ge \frac{(1-\varepsilon)\beta^k}{(k+1)!}.$$
\end{theorem}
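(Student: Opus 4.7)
The plan is to combine the two structural results already established in this section with the $\delta_1$-expansion machinery of Section 4. First, since $f \in Z^k(X) \setminus B^k(X)$, we have $\delta f = \mathbf{0}$, so for every $(k+1)$-face $\sigma$ the sum $\sum_{v \in \sigma} f(\sigma\setminus v)$ vanishes $\bmod\ 2$. In particular, every $(k+1)$-face contains an even number of support-elements of $f$, so $\delta_1(f) = \emptyset$ and $\norm{\delta_1(f)} = 0$. Second, by Theorem~\ref{thm:cohomologies-double-balanced}, $f$ is $((\ell+1)/\beta)$-double balanced in each dimension $\ell < k$, i.e.\ the double-balance constant \emph{decreases} as we drop dimension. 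The strategy is to contrapose Theorem~\ref{thm:some-delta1-for-small-sets} to conclude $\norm{f} > \eta_{-1}$ for a suitable threshold $\eta_{-1}$, and show that the dimension-dependent balance lets us take $\eta_{-1} \approx \beta^k/(k+1)!$ rather than the $\beta^k/(k+1)^{k+1}$ one gets from plugging the worst-case $\alpha = (k+1)/\beta$ uniformly into Theorem~\ref{thm:some-delta1-for-small-sets}.

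To carry this out, I would re-run the dense-faces iteration (Proposition~\ref{prop:bound-dense-faces} via Lemma~\ref{lem:dense-faces-one-dimension-lower}) with the refined balance constants $\alpha_i = (i+1)/\beta$ in place of a single $\alpha$. The key observation in Lemma~\ref{lem:dense-faces-one-dimension-lower} is that $\sigma \in \dense_i$ forces some vertex $v \in \sigma$ with $\norm{(f_{\sigma\setminus v})^v} > \eta_i/\alpha_i = \beta\eta_i/(i+1)$; repeating the calculation verbatim with this improved ratio yields the recursion
\begin{equation*}
\eta_{i-1} \;=\; \frac{\beta}{i+1}\,\eta_i \;-\; (\text{small error depending on }\lambda).
\end{equation*}
Iterating from $\eta_{k-1} = \eta$ down to $i = 0$ produces a leading term
$\eta_{-1} \approx \eta\cdot \prod_{i=0}^{k-1}\frac{\beta}{i+1} = \frac{\eta\,\beta^k}{k!}$,
with the error still governed by a sum of $\lambda^2$ contributions that can be made arbitrarily small by choosing $\lambda = \lambda(d,\beta,\varepsilon)$ small enough.

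Choosing $\eta$ just below $1/(k+1)$ (playing the role of the constant in the proof of Theorem~\ref{thm:some-delta1-for-small-sets}) and absorbing the $\lambda$-error into an extra factor of $(1-\varepsilon)$, I would conclude $\eta_{-1} \ge (1-\varepsilon)\beta^k/(k+1)!$. The refined version of Theorem~\ref{thm:some-delta1-for-small-sets} then asserts that any dimension-dependently $((\ell+1)/\beta)$-double balanced cochain of weight at most $\eta_{-1}$ satisfies $\norm{\delta_1(f)} > 0$. Since $\norm{\delta_1(f)} = 0$ for our cohomology element, the contrapositive forces $\norm{f} > (1-\varepsilon)\beta^k/(k+1)!$, which is the claimed bound.

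The main obstacle is purely bookkeeping: verifying that Lemma~\ref{lem:dense-faces-one-dimension-lower} and the telescoping calculation in Proposition~\ref{prop:bound-dense-faces} go through cleanly when $\alpha$ is replaced by the sequence $\alpha_i = (i+1)/\beta$, and that the resulting error terms (which involve products such as $\prod_{j>i}\alpha_j$ appearing in Lemma~\ref{lem:in-front-seen-with-right-fraction}) remain controllable by a single choice of $\lambda(d,\beta,\varepsilon)$. Once that dimension-dependent variant is in place, the theorem follows immediately from the two facts that $\delta_1(f) = \emptyset$ for a cocycle and that $f$ is double balanced with the declining constants.
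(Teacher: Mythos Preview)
Your proposal is correct and follows essentially the same route as the paper: argue by contradiction, use Theorem~\ref{thm:cohomologies-double-balanced} to get the dimension-dependent double-balance constants $\alpha_\ell=(\ell+1)/\beta$, and then re-run the proof of Theorem~\ref{thm:some-delta1-for-small-sets} (via Proposition~\ref{prop:bound-dense-faces} and Lemma~\ref{lem:dense-faces-one-dimension-lower}) with these varying $\alpha_\ell$ so that the threshold becomes $\eta_{-1}\approx \eta\prod_{i=0}^{k-1}\beta/(i+1)=\eta\,\beta^k/k!$ rather than $\eta/\alpha^k$, contradicting $\delta_1(f)=\emptyset$. The paper's own proof is a three-line invocation of exactly this refinement (recorded in a footnote), so you have effectively written out what the paper leaves implicit.
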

\begin{proof}
Assume towards contradiction that there exists $f \in H^k(X)$ with $\displaystyle\norm{f} < \frac{(1-\varepsilon)\beta^k}{(k+1)!}$. By theorem~\ref{thm:cohomologies-double-balanced}, $f$ is $\big((\ell+1)/\beta\big)$-double balanced in dimension $\ell$ for every $\ell < k$. Then theorem~\ref{thm:some-delta1-for-small-sets} implies\footnote{It is implied by the proof of theorem~\ref{thm:some-delta1-for-small-sets} if we consider for every dimension $\ell$ its own double balance constant $(\ell+1)/\beta$ rather than bounding all constants by the largest one.} that $\norm{\delta_1(f)} > 0$ in contradiction to $f$ being a cohomology elements (i.e., $\delta(f) = 0$).
\end{proof}

\bibliography{Bibliography}

\end{document}